
\documentclass[twocolumn,letter]{IEEEtran}
%


%

%
\usepackage{cite}

%
\ifCLASSINFOpdf
   \usepackage[pdftex]{graphicx}
  \DeclareGraphicsExtensions{.pdf,.jpeg,.png}
\else
  \usepackage[dvips]{graphicx}
  \DeclareGraphicsExtensions{.eps}
\fi
%
%

%
\usepackage[cmex10]{amsmath}
%

%
\usepackage{algorithmic}

%
\usepackage{array}
\usepackage{fixltx2e}

\ifCLASSOPTIONcaptionsoff
  \usepackage[nomarkers]{endfloat}
 \let\MYoriglatexcaption\caption
 \renewcommand{\caption}[2][\relax]{\MYoriglatexcaption[#2]{#2}}
\fi
\usepackage{url}


\usepackage{epsfig}
\usepackage{multicol}
\usepackage{verbatim}
\usepackage{amssymb}
\usepackage{epstopdf}
\usepackage{amsthm}
\usepackage{amsfonts}%
\usepackage{enumerate}%
\usepackage{psfrag}
\usepackage{setspace}
\usepackage{dsfont}
\usepackage{balance}

\usepackage{algorithm}
\usepackage{float}

\usepackage[section]{placeins}
\usepackage{xcolor}

\newcommand\blfootnote[1]{%
  \begingroup
  \renewcommand\thefootnote{}\footnote{#1}%
  \addtocounter{footnote}{-1}%
  \endgroup}

\newtheorem{proposition}{Proposition}

\newtheorem{remark}{Remark}


\hyphenation{op-tical net-works semi-conduc-tor}

\begin{document}

\title{On Orthogonal Band Allocation for Multi-User Multi-Band Cognitive Radio Networks: Stability Analysis}
\author{ Ahmed El Shafie$^\dagger$, Tamer Khattab$^*$\\
\small \begin{tabular}{c}
$^\dagger$Wireless Intelligent Networks Center (WINC), Nile University, Giza, Egypt. \\
$^*$Electrical Engineering, Qatar University, Doha, Qatar. \\
\end{tabular}
}
\date{}
\maketitle
\thispagestyle{empty}
\pagestyle{empty}
\blfootnote{Part of this paper was presented in IEEE Wireless Communications and Networking Conference (WCNC), 2014 \cite{ElSh1404:Band}.}
\blfootnote{This paper was made possible by a NPRP grant 6-1326-2-532 from the
Qatar National Research Fund (a member of The Qatar Foundation). The
statements made herein are solely the responsibility of the authors.}

\begin{abstract}
 In this work, we study the problem of band allocation of $M_s$ buffered secondary users (SUs) to $M_p$ primary bands licensed to (owned by) $M_p$ buffered primary users (PUs). The bands are assigned to SUs in an orthogonal (one-to-one) fashion such that neither band sharing nor multi-band allocations are permitted. In order to study the stability region of the secondary network, the optimization problem used to obtain the stability region's envelope (closure) is established and is shown to be a linear program which can be solved efficiently and reliably. We compare our orthogonal allocation system with two typical low-complexity and intuitive band allocation systems. In one system, each cognitive user chooses a band randomly in each time slot with some assignment probability designed such that the system maintained stable, while in the other system fixed (deterministic) band assignment is adopted throughout the lifetime of the network. We derive the stability regions of these two systems. We prove mathematically, as well as through numerical results, the advantages of our proposed orthogonal system over the other two systems.

\end{abstract}

\section{Introduction}
\IEEEPARstart{T}\small{here} is a recent dramatic increase in the demand for radio spectrum stimulated by the enormous influx of new wireless devices and applications. The cognitive radio communications paradigm enables efficient use of the electromagnetic spectrum. Cognitive or secondary users utilize the spectrum when it is unused by the primary or licensed user. In a typical real--life scenario, such as a secondary network of wireless sensors tapping into spectrum holes of a primary cellular network, multiple cognitive users are trying to utilize spectrum holes in a primary multi-band network. In these scenarios, the design of an efficient spectrum allocation protocol to assign the secondary users (SUs) to the available primary bands is very crucial.

The problem of band allocation in cognitive radio networks has been studied in different settings within the literature~\cite{liu2008distributed,liu2008randomized,digham2008joint,lu2009optimal,gai2010learning,lai2011cognitive,shiang2008queuing,queues}.  In order to avoid convergence to the same channels, the authors in~\cite{liu2008distributed} propose a simple distributed sensing policy where each SU individually decides on a single channel to sense, at every time slot, with the objective of maximizing the probability of finding the channel idle while minimizing the probability of colliding with other SUs. A suboptimal randomized channel access policy is derived. The channel access probability for each SU is determined by its belief, which is the conditional probability given all past decisions and observations, that the channels are in a particular state of occupancy by the primary users (PUs). The system is a type of history-based greedy method, which cannot guarantee the optimality of the solution.  Moreover, the system assumes a Markov based model for channel occupancy state, which is not necessarily the case in all systems. In \cite{liu2008randomized}, the system model is changed to assume that each user can sense multiple channels at the same time. The probabilities of sensing the different channels are assigned to the SUs, and the sensing policy is formulated as an optimization problem over all combinations of the assignment probabilities to maximize the total throughput of the network.  While the work addresses the optimal strategy for multi-user multi-band cognitive allocation, it ignores the existence of buffers (queues) in primary and SUs, which is typically the practical case.  Furthermore, the effect of time slots wasted by the SUs to perform channel sensing is not taken into consideration.  In addition, the practicality of sensing multiple channels at the same time is questionable, since it mandates a transceiver capable of aggregating multiple bands at the same time while dealing with each one independently, which requires multiple radio frequency (RF) chains. Moreover, channel fading and noise effects are not considered in the studied system.

 The work in~\cite{digham2008joint} investigates the case where a set of channels is distributed among multiple secondary nodes that opportunistically access the available spectrum with optimal power allocation. The solution of the band allocation problem is obtained via maximizing the total sum capacity of the cognitive radio network both with and without interference constraints on the PUs.  The solution is found to be a modified form of water filling.  By introducing an interference
temperature constraint to guarantee PUs' quality of service (QoS), the authors of \cite{lu2009optimal} proposed an optimal subcarrier and
power allocation algorithm to maximize the overall utility
for SUs. The authors of \cite{gai2010learning} considered the optimal matching of SUs to primary channels in a stochastic setting
as a combinatorial multi-armed bandit problem. Each of the SUs selects a channel to sense and access according to some
policy. The objective is to find an allocation of channels for all SUs that maximizes the expected sum throughput. They investigated a naive policy that ignores the dependencies between the arms and developed a sophisticated policy
that matches learning with polynomial storage. In \cite{lai2011cognitive}, a cognitive medium access protocol is proposed for uncertain environments where the PU traffic statistics are unknown a priori and have to be learned and tracked. In the case of multiple SUs, the channel selection is formulated as an optimization problem for cooperative SUs and a non-cooperative game for selfish SUs, respectively. The presence of data queues as well as the effect of non-negligible sensing time in the system has not been considered in all the aforementioned work.

Resource allocation involving buffer dynamics in a cognitive setting has been considered in a few works such as \cite{shiang2008queuing} and \cite{queues}. In \cite{shiang2008queuing}, a dynamic channel-selection for autonomous wireless users is proposed, where each user has a set of actions and strategies. Based on priority queueing analysis (i.e., priority classes among SUs), each wireless user can evaluate its utility impact based on the behavior of
the users deploying the same frequency channel including the
PUs. The work in \cite{queues} investigates the resource allocation problem for the downlink of an orthogonal frequency division multiple access (OFDMA) based cognitive radio network. Prior to the beginning of each frame, each user transmits to the base station its sensing
information vector as well as its latest channel gain vector, which was obtained based on pilot symbols. Based on the received information from the users and the current backlog for each user, the base station performs resource allocation for the frame. The resource allocation map is then sent to the users and is
valid for the remainder of the frame, which is composed of multiple time slots. The aforementioned work uses a utility based approach to achieve a certain QoS requirement for the SUs.  However, it doesn't address the fundamental limits on performance under the assumption of buffered users in different channel allocation schemes, which is one of the main contributions of our work.


In this work, we propose a novel orthogonal channel allocation scheme for cognitive users.  We study the throughput closure of our proposed scheme as well as revisit schemes previously proposed in the literature considering buffered users, time slotted channels and include channel outage effects on the system's performance. We do not assume the availability of channel state information (CSI) at the transmitting terminals. In our proposed system, we consider a time-slotted primary channel over which each PU starts transmitting at the beginning of the time slot whenever it has packets to communicate. Each PU uses a separate band (sub-channel) of the channel with a certain bandwidth. The permutations of the SUs orthogonal assignment (a single user is assigned exclusively to a single band) to the different bands are probabilistically generated at the beginning of each time slot. Each SU senses the primary band assigned to it to detect the activity of the PU owning the band and will only transmit in case the PU is idle. By varying the assignment permutation probabilities, we can obtain the maximum stable-throughput region for the secondary network.
 To the best of our knowledge, the investigation of the considered systems from the network layer standpoint is addressed in this paper for the first time.  The following is a list of what we believe are the new contributions in this paper:
\begin{itemize}
\item We propose a novel orthogonal channel allocation scheme for cognitive radio networks composed of multiple PUs and SUs.
\item We study the stability region of the proposed system as well as two reference low-complexity intuitive channel allocation systems that have been previously proposed in the literature; namely {\it random selection of bands} and {\it fixed band allocation}, comparing their performance to the proposed system while taking buffers and channel outages into account without CSI at the transmitter side.
\item We are able to mathematically model the throughput closure of our proposed system by constructing an exact optimization model for its maximum stable-throughout region, which is  shown to be a linear program. Then, we provide several important exact solutions for the stability regions and the assignments policy in our system for the important example cases of two SUs and two primary bands, multiple SUs and one primary band, symmetric primary bands, symmetric SUs, and symmetric primary bands with symmetric SUs.
\item We provide mathematical and numerical proofs for the advantages of our proposed system, in terms of throughput closure, over the two classical systems of fixed channel assignments and random selection of bands.
\end{itemize}


\section{system model}
\label{system_model}
We propose a cognitive radio system, denoted by $\mathcal{S}$, in which $M_s$ SUs are assigned to $M_p$ licensed orthogonal
frequency bands. All users operate in a time-slotted fashion. The primary band $j$ has bandwidth $W_{j}$, where in general $W_j\ne W_i$ for $j\ne i$ and $j,i\in \{1,2,\dots,M_p\}$. The secondary network consists of a finite number, $M_s$, of terminals numbered
$1,2,\dots,M_s$. Each terminal, whether primary or secondary, has an infinite queue for storing fixed-length packets \cite{sadek,tvtjournal}.\footnote{We can consider the case of finite queues. However, we will replace the use of Loynes theorem with the constraint that the probability of each of the queues being empty is greater than zero. The characterization of stability will not be possible as we cannot get the closure of rates. Moreover, the constraints will be non-linear; hence, the optimization problem will become a non-convex program. To render the characterization of the stability region tractable, we make use of the widely used assumption of infinite-length queues \cite{sadek,simeone,erph,krikidis2010stability,rao1988stability,krikidis2012stability}. Note that this assumption is a reasonable approximation when the packet size is much smaller than the buffer size \cite{krikidis2012stability}.} The $j$th PU, $p_j$, has a queue denoted by $Q_{{p}_{j}}$, whereas the $k$th SU, $s_k$, has a queue denoted by $Q_{s_k}$. We adopt a discrete-time late-arrival model, which means that a newly arriving packet during a particular time slot cannot be transmitted during the slot itself even if the queue is empty. This model is widely used for queueing systems and has been considered in many papers such as \cite{sadek,tvtjournal,krikidis2012stability} and the references therein. Arrival processes at all queues are Bernoulli random variables that are independent across terminals and independent from slot to slot \cite{sadek,tvtjournal}. The mean arrival rate at $Q_{p_j}$ is $\lambda_{p_j}$ and at $Q_{s_k}$ is $\lambda_{s_k}$. If a terminal transmits during a time slot, it sends exactly one packet to its receiver.

A PU, $p_j$, owning the band $B_j$ (or band $j$), transmits the packet at the head of its queue starting from the beginning of the time slot. The SUs access the channel as follows. Each SU senses the channel assigned to it for a duration of $\tau$ seconds, which is assumed to be a fraction of the slot duration, $T$. We assume that $\tau$ is chosen such that the probability of an erroneous secondary decision regarding primary activity is negligible. If the band is sensed to be free from primary activity, the SU, which is assigned to this band, transmits till the end of the time slot. Note that the transmission time is $T-\tau$ not $T$, but it still transmits one full packet. This can be implemented by the terminal via adjusting its transmission rate, e.g., by using a signal constellation with more symbols or by increasing the channel coding rate or both. Note that by doing this, the probability of link outage increases. This is the price of transmission delay relative to the beginning of the time slot and it is exactly quantified at the end of this section.

\subsection{The Proposed Orthogonal Band Allocation System}
For system $\mathcal{S}$, each band has at most one SU, and each SU is assigned to exactly one band.  We call this system, {\it orthogonal band allocation}. In order to unify the presentation of the orthogonal band allocation method, if the number of SUs is greater than the available primary bands, and since our protocol does not allow multiple assignment of users to the same band, we can assume the presence of $M_s\!-\!M_p$ virtual bands with zero bandwidth. Thus, the service rate on any of these bands is exactly equal to zero. The pattern of the orthogonal allocation of bands to SUs at any time slot is represented by the permutation $\Pi_n$ given by the $M_s$-tuple $(m_1,m_2, \dots,m_k, \dots, m_{M_s})$ over the set of primary bands, $\mathcal{B}$, where
\begin{equation}
\begin{split}
\mathcal{B} = \begin{cases}
 \{0, 1,2,\dots,M_p\} & M_s > M_p\\
\{1,2,\dots,M_p\} & M_s \leq M_p
\end{cases},
\end{split}
\end{equation}
$m_k\in \mathcal{B}$ and $m_k \neq m_\ell \;,\; \forall k \neq \ell$ unless $m_k\!=\!m_\ell\!=\!0$.  The permutation $\Pi_n = (m_1,m_2, \dots, m_{M_s})$ represents the orthogonal assignment pattern that assigns band $B_{m_1}$ to SU $s_1$, band $B_{m_2}$ to SU $s_2$ and so on, with $m_k= 0$ meaning that SU, $s_k$, is assigned to a virtual band with zero bandwidth. At the beginning of each time slot, a predefined SU controller,\footnote{The proposed centralized method can be useful for cognitive radio scenarios where the SUs belong to a heterogenous network such as a wireless sensor network or a secondary cellular network. For similar centralized works, the reader is referred to \cite{digham2008joint} and the references therein.} which can be one of the SUs, randomly generates one of the possible permutations $\Pi_n$ (band assignment pattern) with probability $q(\Pi_n)$. Consequently, each SU knows its allocated band and starts the sensing process independently. It is evident that the assignments are the permutation without repetition of choosing $M_s$ elements out of $M_p$ elements, if $M_p \ge M_s$, or choosing $M_p$ elements out of $M_s$ elements, if $M_s \ge M_p$. Hence, calling the set of all possible band assignment permutations $\mathcal{M}$, the cardinality of $\mathcal{M}$, denoted as $|\mathcal{M}|$, is given by
\begin{equation}
\begin{split}
\label{eqn:cardinality}
|\mathcal{M}| = \begin{cases}
\frac{M_p!}{(M_p-M_s)!} & M_p \ge M_s\\
\frac{M_s!}{(M_s-M_p)!} & M_p < M_s
\end{cases},
\end{split}
\end{equation}
where $r!$ denotes the factorial of $r$. It is clear that the summation over the permutations probabilities is given by
\begin{equation}
\sum_{n=1}^{|\mathcal{M}|}q(\Pi_n) = 1
\end{equation}

Instead of looking at the probability distribution of the different assignment permutations, one can look at a different quantity that deals with the individual assignments of a particular band to a particular user. Let $\omega_{jk}$ denote the fraction of time slots during the lifetime of the network that the SU, $s_k$, is assigned to the band $B_j$. It is evident that the following two constraints on $\omega_{jk}$ must hold:
\begin{equation}
\begin{split}
\sum_{k\!=\!1}^{M_s}\omega_{jk}\le 1, \forall j\in \mathcal{B},
\end{split}
\label{constraints_omega1}
\end{equation}
where equality holds in the case $M_s \ge M_p$; and
\begin{equation}
\begin{split}
\sum_{j=1}^{M_p}\omega_{jk}\le1, \forall k\in\{1,\dots,M_s\},
\end{split}
\label{constraints_omega}
\end{equation}
where equality holds in the case $M_p \ge M_s$. Hence, both constraints become equalities if and only if $M_p\!=\!M_s$. Defining the subset of all possible permutations of band allocations conditioned that band $B_{j}$ is assigned to SU $s_k$ as $\mathcal{M}_{jk} \subset \mathcal{M}$, the relationship among $\omega_{{j}k}$ and $q(\Pi_n)$ can be stated as follows:
\begin{equation}
\begin{split}
\omega_{{j}k}\!=\!\sum_{\Pi_n \in \mathcal{M}_{jk}} q(\Pi_n), \forall k\in\{1,\dots,M_s\}.
\end{split}
\label{constraints_omega2}
\end{equation}

 The probability that band $B_j$ is free/available is the probability that the primary queue assigned to the band is empty. If the queue of user $p_j$ is stable, i.e., $\mu_{p_j}\ge \lambda_{p_j}$, the probability that the queue is empty is given by\footnote{This formula follows from solving the Markov chain of the primary queue under the late-arrival model.}
\begin{equation}
\pi_{j}\!=\!1-\frac{\lambda_{p_j}}{\mu_{p_j}}
\label{eqn1xoo}
\end{equation}
where $\mu_{p_j}$ is the mean service rate of $p_j$ and is given by the complement of the outage event of the channel between the primary transmitter $p_j$ and its respective receiver under perfect sensing assumption. If the queue is unstable, i.e., $\mu_{p_j}< \lambda_{p_j}$, the primary queue is saturated and the probability of the band being available for the SUs is zero. That is, $\pi_{j}=0$ when $\mu_{p_j}< \lambda_{p_j}$. Combining both cases, the probability of the $j$th primary band being available is given by
\begin{equation}
\pi_{j}\!=\!1-\min\{\frac{\lambda_{p_j}}{\mu_{p_j}},1\}
\label{eqn1}
\end{equation}

A feedback message from the respective receiver is sent at the end of each time slot to inform the corresponding transmitter about the decodability status of the transmitted packet.
 If the respective destination decodes the packet successfully, it sends
back an acknowledgement (ACK), and the packet is removed from the transmitter's queue.
 If the respective destination fails to decode the packet due to channel outages, it sends back a negative-acknowledgement (NACK), and the packet is retransmitted at the following time slot.

We summarize MAC algorithm of system $\mathcal{S}$ as shown in Algorithm~\ref{alg:S}.
\begin{algorithm} {}
\caption{$\mathcal{S}$--MAC}
\begin{algorithmic} \label{alg:S}

\WHILE{TRUE}
\STATE{{\it Assignment}:}
\STATE{generate $\Pi_n$ w.p. $q(\Pi_n)$}
\FOR{$\forall s_k$}
\STATE{assign band $B_{j}$ to user $s_k$}
\ENDFOR

\STATE

\STATE{{\it Primary}:}
\FOR{$\forall p_j$}
\IF{$Q_{p_j}$ not empty}
\STATE{transmit packet at head of $Q_{p_j}$}
\ENDIF
\ENDFOR

\FOR{$\forall p_j$}
\IF{ACK received}
\STATE{remove packet at head of $Q_{p_j}$}
\ENDIF
\ENDFOR

\STATE

\STATE{{\it Secondary}:}
\FOR{$\forall s_k$}
\IF{$Q_{s_k}$ is not empty}
\STATE{sense $B_{j}$ for duration $\tau$}
\IF{$B_{j}$ idle}
\STATE{transmit packet at head of $Q_{s_k}$}
\ENDIF
\ENDIF
\ENDFOR

\FOR{$\forall s_k$}
\IF{ACK received}
\STATE{remove packet at head of $Q_{s_k}$}
\ENDIF
\ENDFOR

\ENDWHILE

\end{algorithmic}
\end{algorithm}


We adopt a flat fading channel model and assume that the channel gains remain constant over the duration of the time slot. We do not assume the availability of the CSI at the transmitting terminals.  Assuming that the number of bits in a packet is $b$, the transmission rate of the secondary transmitter $s_k$ is
\begin{equation}
r_{s_k}\!=\!\frac{b}{T-\tau}
\label{r_i}
\end{equation}
\noindent Outage occurs when the transmission rate exceeds the channel capacity \cite{sadek,tvtjournal}
\begin{equation}
{\rm Pr}\bigg\{O_{j,s_k}\bigg\}\!=\!P_{{\rm out},js_k}\!=\!{\rm Pr}\bigg\{r_{s_k} > W_j \log_{2}\left(1\!+\!\gamma_{s_k} \alpha_{js_k}\right)\bigg\}
\end{equation}
\noindent where $O_{j,s_k}$ is the event of channel outage when band $B_j$ is assigned to user $s_k$, $W_j$ is the bandwidth of $B_j$, $\gamma_{s_k}$ is the received signal-to-noise-ratio (SNR) at the receiver of user $s_k$ when the channel gain is equal to unity, and $\alpha_{js_k}$ is the channel gain when user $s_k$ is assigned to band $B_j$, which is exponentially distributed in the case of Rayleigh fading. The outage probability can be written as \cite{sadek,tvtjournal}
\begin{equation}
P_{{\rm out},j{s_k}}\!=\!{\rm Pr}\Bigg\{\alpha_{js_k}<\frac{2^{\frac{r_{s_k}}{W_j}}-1}{\gamma_{s_k}}\Bigg\}
\end{equation}
\noindent Assuming that the mean value of $\alpha_{js_k}$ is $\sigma^2_{s_k}$, $P_{{\rm out},js_k}\!=\!1\!-\!\exp\bigg(-\frac{2^{\frac{r_{s_k}}{W_j}}-1}{\gamma_{s_k}\sigma^2_{s_k}}\bigg)$ for a Rayleigh fading channel. Let $\overline{P}_{{\rm out},js_k}\!=\!1-P_{{\rm out},js_k}$\footnote{Throughout the paper $\overline{z}\!=\!1-z$.}
 be the probability of the complement event $\overline{O}_{j,s_k}$. This probability of {\bf correct} packet reception is therefore given by
\begin{equation}
\overline{P}_{{\rm out},js_k}\!=\!\exp\bigg(-\frac{2^{\frac{b}{TW_j\left(1-\frac{\tau}{T}\right)}}-1}{\gamma_{s_k}\sigma^2_{s_k}}\bigg)
\label{corrprob}
\end{equation}
Note that the virtual bands are of unity outage probability because the available bandwidth is zero.

The packet correct reception probability of user $p_j$ transmitting to its respective receiver is given by a similar formula as in (\ref{corrprob}) with the respective primary parameters as follows:
\begin{equation}
\overline{P}_{{\rm out},p_j}\!=\!\exp\bigg(-\frac{2^{\frac{b}{TW_j}}-1}{\gamma_{p_j}\sigma^2_{p_j}}\bigg)
\label{corrprob2}
\end{equation}
\section{Stability Analysis of the System $\mathcal{S}$}\label{system_S}
A fundamental performance measure of a buffered communication network is the stability of the queues. Stability can be defined rigorously as follows.
Denote by $Q^{\left(t\right)}$ the length of queue $Q$ at the beginning of time slot $t$. Queue $Q$ is said to be stable if \cite{sadek,tvtjournal}
\begin{equation}\label{stabilityeqn}
    \lim_{x \rightarrow \infty  } \lim_{t \rightarrow \infty  } {\rm Pr}\{Q^{\left(t\right)}<x\}\!=\!1
\end{equation}

In a multiqueue system, the system is stable when {\bf all} queues are stable. We can apply Loynes' theorem to check the stability of a queue \cite{sadek}. This theorem states that if the arrival process and the service process of a queue are strictly stationary, and the average service rate is greater than the average arrival rate of the queue, then the queue is stable. If the average service rate is lower than the average arrival rate, then the queue is unstable.

According to the adopted late-arrival model, the queue $Q_\nu$ evolves as follows:
\begin{equation}
    Q_\nu^{t\!+\!1}\!=\!(Q_\nu^t-\mathbb{D}^t_\nu)^+ \!+\!A^t_\nu,
\end{equation}
where $\mathbb{D}_\nu^t$ is the number of departures from queue $Q_\nu$ at time slot $t$, $A_\nu^t$ is the number of arrivals at $Q_\nu$ at time slot $t$, and $(\zeta)^+$ denotes $\max\{\zeta,0\}$.

 The queue of PU $p_j$ is stable when $\lambda_{p_j} < \mu_{p_j}$. The mean service rate of PU $p_j$ is given by
\begin{equation}
\mu_{p_j}\!=\! \overline{P}_{{\rm out},{p_j}}, \,\ \forall \ j\!\in\mathcal{B}
\label{eqn4}
\end{equation}

  A packet at the queue head of user $s_k$ is served if the band $B_{j}$ assigned to $s_k$ is available and the channel to its respective receiver is not in outage. Define $\mu_{jk}\!=\!\pi_{j} \overline{P}_{{\rm out},{{j}{s_k}}}$, which is the average service rate when band $B_{j}$ is allocated to user $s_k$.  Accordingly, the mean service rate, $\mu_{s_k}$, of user $s_k$ is given by
\begin{equation}
\mu_{s_k}\!=\! \sum_{j\in \mathcal{B}} \ \sum_{\Pi_n\in \mathcal{M}_{jk}}  q(\Pi_n) \mu_{jk}
\label{eqn6}
\end{equation}
Using (\ref{constraints_omega2}), we can write
\begin{equation}
\mu_{s_k}\!=\! \sum_{j\in\mathcal{B}} \omega_{{j}k} \mu_{jk}
\label{omeg_for}
\end{equation}
The expression in (\ref{omeg_for}) can be interpreted as follows: The $k$th SU is served if it is assigned to the primary band $j$, which occurs with probability $\omega_{{j}k}$, while this band is free/available and the associated channel to the $k$th SU respective receiver is not in outage.

The stability region is characterized by the closure of rates $(\lambda_{s_1},\lambda_{s_2},\dots,\lambda_{s_{M_s}})$. One method to characterize this closure is to solve a constrained
optimization problem to find the maximum feasible $\lambda_{s_k}$ corresponding
to each feasible $\lambda_{s_\ell}$, $\ell \neq k$, with all the system queues being stable~\cite{sadek,tvtjournal}. Specifically, for fixed $\lambda_{s_\ell}$, for all  $\ell \neq k$, the maximum stable-throughput region is obtained via solving the following optimization problem:

\begin{equation}
\begin{split}\label{eqn:opt1}
\underset{q(\Pi_n) \geq 0}{\max.} & \ \ \lambda_{s_k} =  \sum_{j\in \mathcal{B}} \ \sum_{\Pi_n\in \mathcal{M}_{jk}}  q(\Pi_n) \mu_{jk}, \\
\text{s.t. } & \sum_{n=1}^{|\mathcal{M}|} q(\Pi_n) = 1, \\
& \lambda_{s_\ell} \le  \sum_{j\in \mathcal{B}} \ \sum_{\Pi_n\in \mathcal{M}_{j\ell}}  q(\Pi_n) \mu_{j\ell}, \forall \ell \neq k.
    \end{split}
    \end{equation}
    The optimization problem in \eqref{eqn:opt1} is a linear program and can be solved using any standard linear programming technique. However, the total number of variables is $|\mathcal{M}|$ which grows very quickly with $M_p$ and $M_s$ according to~\eqref{eqn:cardinality}.

In order to decrease the total number of optimization variables, we use an equivalent optimization problem in terms of $\omega_{jk}$ instead of $q(\Pi_n)$. Defining matrix $\Omega$ such that its $jk$ element is $\omega_{jk}$ and using (\ref{omeg_for}), the optimization problem can be rewritten as follows:
\begin{equation}
\begin{split}\label{eqn:opt2}
\underset{\Omega}{\max.} & \ \ \lambda_{s_k} = \sum_{j\in \mathcal{B}} \omega_{jk} \mu_{jk}\\
\text{s.t. } & 0 \le \omega_{jh} \ \forall j,h ,\\
& \sum_{j=1}^{M_p} \omega_{jh} \le 1 \ \forall h, \ \ \ \sum_{h=1}^{M_s} \omega_{jh} \le 1 \ \forall j, \\
& \lambda_{s_\ell} \le \sum_{j \in \mathcal{B}} \omega_{j\ell}\ \mu_{j \ell } \ \forall \ell \neq k
\end{split}
\end{equation}
  \noindent where $h,\ell\in \{1,2,\dots,M_s\}$. The optimization problem in~\eqref{eqn:opt2} is still a linear program, which can be solved efficiently. It has a total number of variables $M_s \times M_p \ll |\mathcal{M}|$ which is much less than the total number of variables in~\eqref{eqn:opt1}.

\begin{remark}
After solving the optimization problem in~\eqref{eqn:opt2} and obtaining the optimal value of $\Omega$, the operation of the system (see Algorithm~\ref{alg:S}) requires the values of $q(\Pi_n)$, which can be obtained from $\Omega$ using Birkhoff algorithm (see~\cite{chang2000birkhoff,li2001enhanced,chang2002load,peng2006quick} and references therein).
\end{remark}

\begin{remark}
The Birkhoff algorithm is applied on square doubly stochastic matrices.\footnote{A doubly stochastic matrix (also called bistochastic), is a matrix $A\!=\!(a_{jk})$ of nonnegative real numbers and each of its rows and columns sums to unity, i.e., $\underset{j}{\sum} a_{jk}\!=\!\underset{k}{\sum} a_{jk}\!=\!1$.} Therefore, to enable its application in our system, if $M_s>M_p$, we assume that there are virtual bands of zero bandwidth to which $M_s-M_p$ users are assigned. Similarly, if $M_p>M_s$, we assume that there are virtual SUs with zero-arrival rate and unity outage probability.
\end{remark}

\begin{remark}
The optimization problem and the associated optimal solution are functions of only long term statistics of the system such as channel variances, average arrival rates of the SUs, outage probabilities of the links, and probability of the bands being empty or nonempty. There are no dependencies on instantaneous values such as CSI. Thus, the optimization problem can be solved off-line and the corresponding optimal parameters can be used for a long duration of the network life-time. Therefore, once the optimization problem is solved at a central fusion (or a controller),  the controller can supply long sequences of assignment patterns to each user to be used for long operational time. If any of the average parameters change, the controller solves the problem again with the new parameters and feeds the users with the new assignments. Thus, the operation becomes a matter of long term system tuning, which eliminates the need to worry about signalling overhead typically associated with centralized dynamic optimization problems.
\end{remark}

\begin{remark}
The optimal solution of the optimization problems is not unique, in general. However, any of the optimal solutions will provide the same stability
region as they achieve the same rates for users.
\end{remark}

 \begin{proposition}
 The stable-throughput region of system $\mathcal{S}$ is a \textbf{convex} polyhedron.
 \end{proposition}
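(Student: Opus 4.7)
The plan is to show both properties—convexity and the polyhedral structure—directly from the linear formulation already established in equation~\eqref{eqn:opt2}. Recall that by Loyne's theorem together with~\eqref{omeg_for}, a rate vector $(\lambda_{s_1},\dots,\lambda_{s_{M_s}})$ is stably supportable if and only if there exists an assignment matrix $\Omega=(\omega_{jk})$ lying in the polytope
\begin{equation*}
\mathcal{F}=\Bigl\{\Omega:\omega_{jk}\!\ge\!0,\ \textstyle\sum_{k}\omega_{jk}\!\le\!1\ \forall j,\ \sum_{j}\omega_{jk}\!\le\!1\ \forall k\Bigr\}
\end{equation*}
such that $\lambda_{s_k}\le\sum_{j\in\mathcal{B}}\omega_{jk}\mu_{jk}$ for every $k$. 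Thus the stability region $\mathcal{R}$ is exactly the projection onto the $\lambda$-coordinates of the polyhedron in $(\lambda,\Omega)$-space defined by these linear constraints together with $\lambda_{s_k}\ge 0$.

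First I would prove convexity. Take any two achievable rate vectors $\boldsymbol{\lambda}^{(1)},\boldsymbol{\lambda}^{(2)}\in\mathcal{R}$, with witnesses $\Omega^{(1)},\Omega^{(2)}\in\mathcal{F}$. For any $\theta\in[0,1]$, the matrix $\Omega^{\theta}=\theta\Omega^{(1)}+(1-\theta)\Omega^{(2)}$ again lies in $\mathcal{F}$ since $\mathcal{F}$ is itself a convex polytope (it is defined by linear inequalities). Because the service expression $\sum_{j}\omega_{jk}\mu_{jk}$ is linear in $\Omega$, the matrix $\Omega^{\theta}$ supports the rate vector $\theta\boldsymbol{\lambda}^{(1)}+(1-\theta)\boldsymbol{\lambda}^{(2)}$, which therefore belongs to $\mathcal{R}$. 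Hence $\mathcal{R}$ is convex.

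Next I would argue the polyhedral property. The set
\begin{equation*}
\widetilde{\mathcal{R}}=\Bigl\{(\boldsymbol{\lambda},\Omega):\Omega\in\mathcal{F},\ \lambda_{s_k}\le\textstyle\sum_{j}\omega_{jk}\mu_{jk},\ \lambda_{s_k}\ge 0\Bigr\}
\end{equation*}
is defined by a finite number of linear inequalities and is therefore a polyhedron in the finite-dimensional space $\mathbb{R}^{M_s}\times\mathbb{R}^{M_s M_p}$. The stability region is the image of $\widetilde{\mathcal{R}}$ under the linear projection $(\boldsymbol{\lambda},\Omega)\mapsto\boldsymbol{\lambda}$. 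By Fourier--Motzkin elimination (applied to the $M_s M_p$ variables $\omega_{jk}$), a linear projection of a polyhedron is again a polyhedron described by finitely many linear inequalities in $\boldsymbol{\lambda}$. Combining this with the convexity argument above establishes the claim.

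The only subtle point, which I would flag but not belabor, is ensuring that the equivalence between feasible $\Omega$ and feasible permutation distributions $\{q(\Pi_n)\}$ used to derive~\eqref{eqn:opt2} is tight; this is handled by the Birkhoff-type decomposition mentioned in the remark following the optimization problem, and it is not needed for convexity itself since convexity is argued purely in the $\Omega$ formulation.
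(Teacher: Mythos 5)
Your proof is correct, and it rests on the same key observation as the paper's — that the service rate $\sum_j \omega_{jk}\mu_{jk}$ is affine in $\Omega$ and the feasible set $\mathcal{F}$ is cut out by finitely many linear inequalities. The difference is one of rigor rather than of route: the paper's proof simply asserts that "the stability region, which is the intersection of the constraints, is a convex set, i.e., a polyhedron," which quietly conflates the joint polyhedron in $(\lambda,\Omega)$-space with the stability region itself, the latter being the \emph{projection} of that polyhedron onto the $\lambda$-coordinates. You make this step explicit twice over: convexity follows from taking convex combinations of witness matrices $\Omega^{(1)},\Omega^{(2)}\in\mathcal{F}$, and the polyhedral property follows from Fourier--Motzkin elimination of the $\omega_{jk}$ variables, which guarantees the projection is again described by finitely many linear inequalities. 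Your closing remark about the tightness of the $\Omega$-versus-$q(\Pi_n)$ equivalence via the Birkhoff decomposition is also the right thing to flag — it is exactly what justifies arguing in the smaller $\Omega$ formulation of \eqref{eqn:opt2} rather than over the $|\mathcal{M}|$ permutation probabilities — and the paper likewise defers it to a remark. In short, you prove the same statement by the same essential mechanism, but you close a genuine (if minor) gap in the paper's one-line argument.
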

\begin{proof}
From (\ref{eqn:opt1}) and (\ref{eqn:opt2}), and since the mean service rate of user $s_k$ is affine function of $\omega_{jk}$ and $q(\Pi_n)$ for all $j,k$ and $n$, the stability region, which is the intersection of the constraints, is a convex set, i.e., a polyhedron. Thus, the set of rate tuples $(\lambda_{s_1},\lambda_{s_2},\lambda_{s_3},\dots,,\lambda_{s_{M_s}})$ is convex.
\end{proof}
The stability region being a convex polyhedron
corresponds to a regime in which when one of the SU increases its rate, the other users' maximum supportable rates decrease
linearly. Another interpretation of the convexity of the stability region is
that when the stability region is convex then higher sum rates
can be achieved \cite{naware2005stability}. Also, since the stability region is convex, if two rate pairs are stable then the line segment connecting those two rate pairs is also
composed of stable rate pairs \cite{naware2005stability}.

  \subsection{The Case of Two SUs and Two Primary Bands}
    In this subsection, we move our attention to the case of two SUs and two PUs (two bands) to obtain some insights and analytical results for the stability region. Since $M_s\!=\!M_p\!=\!2$ and from (\ref{constraints_omega1}) and (\ref{constraints_omega}), $\omega_{12}\!=\!\omega_{21}$.\footnote{Since the SUs are assigned different bands in each time slot, the probability of assigning user $s_1$ to band $2$ is equal to the probability of assigning user $s_2$ to band $1$.} The stability region is characterized by the closure of rate pairs $(\lambda_{s_1},\lambda_{s_2})$. The optimization problem is stated as:
    \begin{equation}
\begin{split}
     & \underset{0\le \epsilon\le 1}{\max.} \,\,\,\,\,\,\,\,\,\,\ \epsilon \mu_{12}\!+\! \big(1-\epsilon\big)\mu_{22} \\& \,\,\,\ {\rm s.t.}  \,\,\,\,\ \lambda_{s_1}\!\le \!  \epsilon \mu_{21} \!+\! \big(1\!-\!\epsilon\big) \mu_{11}
     \label{opt_two}
    \end{split}
    \end{equation}
    where $\epsilon=\omega_{12}\!=\!\omega_{21}$ is the probability that user $s_2$ is assigned to band $1$ (or user $s_1$ is assigned to band $2$). The optimization problem can be rearranged as follows:
    \begin{equation}
    \label{opt1001}
\begin{split}
     & \underset{0\le \epsilon\le 1}{\max.} \,\,\,\,\,\,\,\ \epsilon \big(\mu_{12}-\mu_{22}\big)\\ &{\rm s.t.}  \,\,\,\,\,\,\,\ \lambda_{s_1}-\mu_{11}\!\le \! \epsilon \big(\mu_{21}- \mu_{11}\big)
    \end{split}
    \end{equation}
    \begin{proposition}\label{prop1}
      For any network with $M_s=2$ SUs and $M_p~=~2$ orthogonal primary bands, the stability region of system $\mathcal{S}$, $\mathcal{R}(\mathcal{S})$, is given by
               \begin{equation}\label{2233344}
   \mathcal{R}(\mathcal{S})\!=\!\bigg\{(\lambda_{s_1},\lambda_{s_2}):\lambda_{s_2} <  \epsilon^* \mu_{12}\!+\! \bigg(1-\epsilon^*\bigg)\mu_{22}\bigg\}
\end{equation}
   where $\epsilon^*$ denotes the optimal value of $\epsilon$ and is a function of $\lambda_{s_1}$. This value depends on $\mu_{jk}$ for all $j,k\in\{1,2\}$ and $\lambda_{s_1}$. Specifically,
    \begin{itemize}
   \item  If $\mu_{12}>\mu_{22}$, $\mu_{21}< \mu_{11}$ and $\lambda_{s_1}-\mu_{11}<0$, the optimal value is $\epsilon^*\!=\!\min\bigg(\frac{ \lambda_{s_1}-\mu_{11}}{ \mu_{21}- \mu_{11}},1\bigg)$.
          \item  If $\mu_{12}>\mu_{22}$, $\mu_{21}\ge \mu_{11}$ and $\lambda_{s_1} \le \mu_{21}$, the optimal value is $\epsilon^*\!=\!1$.
      \item   If $\mu_{12}<\mu_{22}$ and $\mu_{21}> \mu_{11}$, the optimal value is $\epsilon^*\!=\!\max\bigg(\frac{ \lambda_{s_1}-\mu_{11}}{ \mu_{21}- \mu_{11}},0\bigg)$.
    \item  If $\mu_{12}<\mu_{22}$, $\mu_{21}< \mu_{11}$ and $\lambda_{s_1}\le\mu_{11}$, the optimal value is $\epsilon^*\!=\!0$.
         \item If $\mu_{12}\!=\!\mu_{22}$, the optimization problem becomes a feasibility problem. The optimal solution is a set of $\epsilon^*$ that satisfies the constraints. Note that the SUs can use any of the feasible $\epsilon$ in their operation as any of the points belonging to the optimal set provides the same maximum throughput (maximum objective function).
               \item  If $\mu_{21}< \mu_{11}$ and $\lambda_{s_1}>\mu_{11}$; or $\mu_{21}> \mu_{11}$ and $\lambda_{s_1} > \mu_{21}$, the problem is \textbf{infeasible}.
         \end{itemize}
             \end{proposition}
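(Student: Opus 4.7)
The plan is to reduce Proposition \ref{prop1} to a one-variable linear program and then do a straightforward case split driven by the signs of the coefficients of $\epsilon$ in the objective and in the single non-trivial constraint. First I would use the two-user, two-band structure: constraints (\ref{constraints_omega1}) and (\ref{constraints_omega}) become equalities, so $\omega_{11}+\omega_{21}=1$ and $\omega_{12}+\omega_{22}=1$, and together with the orthogonality equality $\omega_{12}=\omega_{21}$ noted above (\ref{opt_two}), this forces $\omega_{11}=\omega_{22}=1-\epsilon$ and $\omega_{12}=\omega_{21}=\epsilon$. Substituting into the general LP (\ref{eqn:opt2}) yields (\ref{opt_two}) with the single scalar decision variable $\epsilon \in [0,1]$ and with a single non-trivial rate constraint on $\lambda_{s_1}$; subtracting the constants $\mu_{22}$ and $\mu_{11}$ from the objective and the constraint, respectively, produces the homogeneous form (\ref{opt1001}). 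This step also establishes the form of $\mathcal{R}(\mathcal{S})$ in (\ref{2233344}), since the stability region is the closure over $\lambda_{s_1}$ of the optimal objective.

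Next I would characterize $\epsilon^*$ case by case. Because the objective $\epsilon(\mu_{12}-\mu_{22})$ is linear in a single scalar, the optimum on any interval $[a,b]\subseteq[0,1]$ lies at $b$ when $\mu_{12}>\mu_{22}$, at $a$ when $\mu_{12}<\mu_{22}$, and anywhere in $[a,b]$ when $\mu_{12}=\mu_{22}$; so the only real work is to identify the feasible interval induced by $\lambda_{s_1}-\mu_{11}\le\epsilon(\mu_{21}-\mu_{11})$ intersected with $[0,1]$. I would split on the sign of $\mu_{21}-\mu_{11}$: if it is positive the constraint is equivalent to $\epsilon\ge(\lambda_{s_1}-\mu_{11})/(\mu_{21}-\mu_{11})$; if it is negative the inequality flips on division and becomes $\epsilon\le(\lambda_{s_1}-\mu_{11})/(\mu_{21}-\mu_{11})$; and if it is zero the constraint reduces to $\lambda_{s_1}\le \mu_{11}$, independent of $\epsilon$. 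Combining these two sign splits with the three possibilities for $\mu_{12}-\mu_{22}$ directly reproduces the four generic cases, the degenerate equality case, and the two infeasibility conditions in the statement.

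Finally I would verify the two infeasibility conditions. Over $\epsilon\in[0,1]$ the quantity $\epsilon(\mu_{21}-\mu_{11})$ ranges in $[\min(0,\mu_{21}-\mu_{11}),\max(0,\mu_{21}-\mu_{11})]$, so the constraint is feasible iff $\lambda_{s_1}-\mu_{11}$ lies at or below the upper endpoint of this range. When $\mu_{21}<\mu_{11}$ this upper endpoint is $0$, forcing $\lambda_{s_1}\le\mu_{11}$; when $\mu_{21}>\mu_{11}$ it is $\mu_{21}-\mu_{11}$, forcing $\lambda_{s_1}\le\mu_{21}$. Violation of either bound gives the infeasibility clause.

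The argument is essentially bookkeeping rather than conceptually difficult; the only part requiring care is the sign-flip when dividing the rate constraint by $\mu_{21}-\mu_{11}$ and the consequent swap between $\min$ in Case~1 and $\max$ in Case~3. The degenerate case $\mu_{12}=\mu_{22}$, where the objective is identically zero and the LP collapses to a feasibility problem with any feasible $\epsilon$ optimal, must be separated out explicitly, as must the boundary subcases where $\mu_{21}=\mu_{11}$ so that the $\epsilon$-coefficient in the constraint vanishes.
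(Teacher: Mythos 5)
Your proposal is correct and follows essentially the same route as the paper: reduce to the single-variable linear program \eqref{opt1001} and place the optimum at the appropriate endpoint of the feasible interval for $\epsilon$, splitting cases on the signs of $\mu_{12}-\mu_{22}$ and $\mu_{21}-\mu_{11}$. The paper only works out the first two items explicitly and declares the rest ``similar,'' so your systematic sign-based case enumeration (including the sign flip on division and the infeasibility bounds) is simply a more complete write-up of the same argument.
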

            \begin{proof}
The first item is explained as follows: If $\mu_{12}>\mu_{22}$, the objective function $\epsilon \big(\mu_{12}-\mu_{22}\big)$ is positive. Hence, the maximum is attained when $\epsilon$ is set to its highest feasible value. If $\mu_{21}< \mu_{11}$ and $\lambda_{s_1}-\mu_{11}<0$, the highest feasible $\epsilon$, from the constraints $ \lambda_{s_1}-\mu_{11}\!\le \! \epsilon (\mu_{21}- \mu_{11})$ and $0\le\epsilon\le 1$, is $\min\bigg(\frac{ \lambda_{s_1}-\mu_{11}}{ \mu_{21}- \mu_{11}},1\bigg)$. Therefore, the optimal $\epsilon$ is $\epsilon^*\!=\!\min\bigg(\frac{ \lambda_{s_1}-\mu_{11}}{ \mu_{21}- \mu_{11}},1\bigg)$.

The second item can be explained as follows:
 If $\mu_{12}>\mu_{22}$, the objective function $\epsilon \big(\mu_{12}~-~\mu_{22}\big)$ is positive. Hence, the maximum is attained when $\epsilon$ is set to its highest feasible value. If $\mu_{21}\ge \mu_{11}$ and $\lambda_{s_1} \le \mu_{21}$, the highest feasible $\epsilon$, from the constraints $ \epsilon\ge \kappa=\underbrace{(\lambda_{s_1}-\mu_{11})}_{\le 0}/ \underbrace{(\mu_{21}- \mu_{11})}_{\ge0}$, where $\kappa\le0$, and $0\le\epsilon\le 1$, is $1$. Hence, the optimal $\epsilon$ is $\epsilon^*=1$.
The other items can be obtained in a similar fashion.
\end{proof}
The stability region of system $\mathcal{S}$ in case of two users and two bands is depicted in Fig.\ \ref{exact_S}.

\begin{figure}
\center
  \includegraphics[width=1\columnwidth]{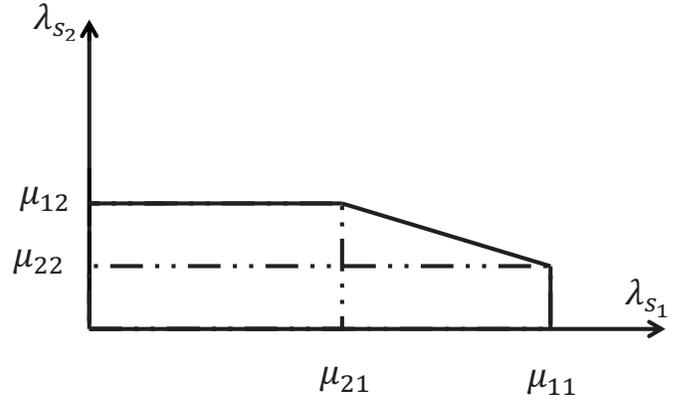}\\
  \caption{Stability of system $\mathcal{S}$. The figure is generated under the assumption that $\mu_{12}>\mu_{22}$ and  $\mu_{11}>\mu_{21}$.}\label{exact_S}
\end{figure}

From the solution, we note that as the rate of user $s_1$ increases, i.e., $\lambda_{s_1}$ increases, the optimal solution is to directly assign user $s_1$ to the band which gives better average throughput for this user. More specifically, if $\mu_{11}>\mu_{21}$, it is more likely to assign user $s_1$ to band $1$ most of the operational time. On the other hand, if $\mu_{11}<\mu_{21}$, user $s_1$ will be assigned to band $2$ most of the operational time. This is motivated by the necessity of stability of user $s_1$ which is maintained by the increase of the service rate of its queue. Let us assume $\mu_{12}>\mu_{22}$ and  $\mu_{11}>\mu_{21}$. At the edge of stability, for $0\le \lambda_{s_2}\le \mu_{22}$, the SU $s_1$ will be assigned to the band with highest $\mu_{j1}$, $j\in\{1,2\}$, i.e., $\mu_{11}$, for all the time, i.e., with probability $1$. Ditto for SU $s_2$. The maximum stable-throughput of $s_2$ for $0\le \lambda_{s_1}\le \mu_{21}$ is $\mu_{12}$. This fact is shown in Fig.\ \ref{exact_S}. We can precisely say that the assignment in those cases is deterministic where the user with low arrival rate is assigned to the band which provides a service rate that merely maintains its stability.

\subsection{The Case of Multiple SUs and One Primary Band}
In this subsection, we investigate the case when only one primary band is available and the other bands are either never idle, i.e., always busy due to the instability of the primary queues assigned to them, i.e., $\lambda>\mu$, or non existing, i.e., only one PU exists in the network. Without loss of generality, we assume that the first band can be empty with certain probability $\pi_1$.

The optimization problem that provides the closure can be written as follows:
    \begin{equation}
\begin{split}
     & \underset{\omega_{1h} \forall h }{\max.} \,\,\,\,\,\,\,\,\,\,\ \omega_{1k} \mu_{1k}\\& \,\,\,\ {\rm s.t.}  \,\,\,\,\ \lambda_{s_\ell}\!\le \! \omega_{1\ell}\mu_{1\ell} \forall \ell\ne k,\,\ 0\le \omega_{1h}\le 1 \forall h \in\{1,2,\dots,M_s\}
    \end{split}
    \end{equation}
The optimization problem is a linear program and can be readily solved. The optimal value of $\omega_{1k}$ is given by
    \begin{equation}
\begin{split}
 \omega_{1h}^*\!=\!1-\frac{\lambda_{s_h}}{\mu_{1h}}
    \end{split}
    \end{equation}
with $\lambda_{s_h}\le \mu_{1h}$. The stability region is given by
         \begin{equation}\label{22333}
   \mathcal{R}(\mathcal{S})\!=\!\bigg\{(\lambda_{s_1},\lambda_{s_2},\dots,\lambda_{s_{M_s}}):\sum_{h=1}^{M_s}\frac{\lambda_{s_h}}{\mu_{1h}}<1\bigg\}
\end{equation}
As is obvious, $\mathcal{R}(\mathcal{S})$ is affine set; hence, convex.
\subsection{The Case of Symmetric SUs}
When the SUs have the same arrival rate, $\lambda_{s_k}=\lambda_s$, and the SUs' channel parameters are equal and therefore all channels outage probabilities are equal for all SUs, the SUs are said to be symmetric. Hence, $ \mu_{{j} k}\!=\!g_{j}$ and $\omega_{jk}\!=\!\theta_j$ for all $k$. In this case, the constraint $\sum_{k\!=\!1}^{M_s} \omega_{jk}\le 1 \ \forall j$ is converted to an upper bound on the feasible value of $\theta_j$. That is, $\theta_j\le 1/M_s$. The optimization problem (\ref{eqn:opt2}) can be rewritten as:
    \begin{equation}
\begin{split}
     &  \underset{\Omega}{\max.}\,\,\ \lambda_{s}\!=\! \sum_{j=1}^{M_p} \theta_j \ g_{j}\\& \,\, {\rm s.t.} \ 0\le \theta_j\!\le\! \frac{1}{M_s}, \ j\!\in\! \{1,2,\dots,M_p\}, \ \sum_{j\in\mathcal{B}}\theta_j=1
    \end{split}
    \end{equation}
    This problem is linear and its exact solution is straightforward. Let us assume without loss of generality that $g_1 ~\ge ~g_2 ~\ge~ g_3~\ge~ \dots~\ge~ g_{M_p}$. To maximize the objective function, we choose
 \begin{equation}
 \begin{split}
    \theta^*_{j} = \begin{cases}
 \frac{1}{M_s},& 0<j\le j_{\max}\\
0,  & j_{\max}<j\le M_p
\end{cases},
\end{split}
\end{equation}
where $j_{\max}\!=\!\min\{M_p,M_s\}$ and $\theta^*_j$ denotes the optimal value of $\theta_j$. If we have $M_p<M_s$, each user is assigned to every one of the $M_p$ bands for $1/M_s$ of the time, and to the null band (zero-bandwidth band) for $1-M_p/M_s$ of the time.

 Based on the optimal solution, we can conclude the following remark. In case of symmetric SUs, the SUs share the {\it best} $\min\{M_s,M_p\}$ primary bands equally likely. The best $\min\{M_s,M_p\}$ primary bands are the $\min\{M_s,M_p\}$ bands with highest $g_{j}$. The optimal solution has the following intuitive explanation. If $M_p\le M_s$, all bands will have one of the SUs each time slot. If $M_p> M_s$, the SUs will select $\min\{M_p,M_s\}\!=\!M_s$ of the primary bands in the channel allocation process. Those bands should be the best in terms of $g_{j}$ because they will provide the highest mean service rates for the secondary queues. If there are two or more bands with the same $g_{j}$, the users share $\min\{M_s,M_p\}$ of the bands even with equal $g_{j}$.
 The maximum secondary mean arrival rate in case of symmetric SUs is then given by
       \begin{equation}
\begin{split}
      \lambda^{\max}_{s}\!=\! \sum_{j=1}^{M_p} \theta^*_{{j}} \ g_{j}
    \end{split}
    \end{equation}
    The stability region is then given by
           \begin{equation}\label{2233344xxxx}
   \mathcal{R}(\mathcal{S})\!=\!\bigg\{(\lambda_{s_1},\lambda_{s_2}, \dots, \lambda_{s_{M_s}}):\lambda_{s_k} <  \sum_{j=1}^{M_p} \theta^*_{{j}} \ g_{j} \forall k\bigg\}
\end{equation}
\subsection{The Case of Symmetric Primary Bands}
Under symmetric bands, the mean arrival rates of the primary queues are equal, i.e., $\lambda_{p_j}\!=\!\lambda_p$ for all $j$, and the channel parameters of all bands are equal. Furthermore, the assigned bandwidth to each primary band is equal, i.e., $W_j\!=\!W$ for all $j$, and $\mu_{ik}\!=\!\mu_{jk}\!=\!\beta_k$ for all $i,j \in\mathcal{B}$ and $k\in\{1,2,\dots,M_s\}$. In this case, the assignments of users will not change the throughput. Specifically, each user gets the same service rate at each band. If $M_p \ge M_s$, the SUs are assigned all the time to any $M_s$ of the $M_p$ primary bands. The mean service rate of each user is fixed over bands and is given by
   \begin{equation}
\begin{split}
    \mu_{s_k}\!=\!\beta_k, \ k\in\{1,2,\dots,M_s\}
    \end{split}
    \end{equation}
    Applying Loynes theorem, the stability region is characterized by
       \begin{equation}
\begin{split}
\mathcal{R}(\mathcal{S})\!=\!\Bigg\{(\lambda_{s_1},\lambda_{s_2},\dots,\lambda_{s_{M_s}}):\lambda_{s_{k}}\!<\! \beta_{k} \forall k\Bigg\}
    \end{split}
    \end{equation}
with $M_p \ge M_s$. This region is a convex \textbf{orthotope} (hyper-rectangle) region.

    If $M_p < M_s$, we need to solve the optimization problem to find the rates' closure. The optimization problem is a linear program. First, we should note that the probability of assigning user $s_k$ to any of the available bands is equal. That is, $\omega_{ik}\!=\!\omega_{jk}\!=\!\eta_k$ for all $i,j \in \mathcal{B}$ and $k\in\{1,2,\dots,M_s\}$. Second, the constraint (\ref{constraints_omega1}) holds to equality. Finally, $\sum_{j=1}^{M_p} \omega_{{j}k}\!=\!\sum_{j=1}^{M_p} \eta_k\!=\!M_p\eta_k$. Substituting by the equality constraint $\sum_{h\!=\!1}^{M_s} \eta_h\!=\! 1\rightarrow \eta_k\!=\!1-\sum_{\substack{{h\!=\!1}\\{ h\ne k}}}^{M_s}\eta_h$ into the objective function, after straightforward simplifications, the optimization problem can be rewritten as follows:
        \begin{equation}
\begin{split}
     &  \underset{0\le \eta_h\le1,\ \forall h }{\min.}\,\,\ \sum_{\substack{{h\!=\!1}\\{ h\ne k}}}^{M_s}\eta_h, \,\, {\rm s.t.} \,\ \eta_h \le\frac{1}{M_p} \ \forall h,\, \ \eta_\ell\ge \frac{\lambda_{s_\ell}}{\beta_{\ell}  M_p}\! \ \forall  \ell \neq k
    \end{split}
    \end{equation}
    Since each term of the sum $\sum_{\substack{{h\!=\!1}\\{ h\ne k}}}^{M_s}\eta_h $ is positive, the minimum of the objective function is attained when the lower constraint of $\eta_h$ holds to equality. That is, $\eta_\ell^*\!=\!\frac{\lambda_{s_\ell}}{\beta_{\ell}  M_p}$ and $\lambda_{s_{\ell}}\le \beta_{\ell}$.\footnote{This condition is obtained from the constraint $\eta_h\le 1/M_p$ which maintains the feasibility of the problem.} Hence, the stability region is characterized by
   \begin{equation}
\begin{split}
\mathcal{R}(\mathcal{S})\!=\!\Bigg\{\!(\lambda_{s_1},\lambda_{s_2},\dots,\lambda_{s_{M_s}})\!:\sum_{k=1}^{M_s} \frac{\lambda_{s_{k}}}{\beta_{k} }\!<\! M_p \bigcap \lambda_{s_{k}}\!<\! \beta_{k} \forall k\Bigg\}
    \end{split}
    \end{equation}
with $M_p < M_s$. Since the stability region when $M_p < M_s$ is the intersection of two affine sets (convex sets), hence it is convex polyhedron.
\subsection{The Case of Symmetric SUs and Symmetric Primary Bands}
  Due to symmetry of bands, $\beta_1\!=\!\beta_2\!=\!\dots\!=\!\beta_{M_p}=\beta$. In this case, each SU is assigned to any of the primary bands with probability $\frac{1}{M_s}$ and each ceases operation (assigned to a null band) with probability $1- \min\{\frac{M_p}{M_s},1\}$. The probability of getting one of the primary bands is $\min\{ \frac{M_p}{M_s},1\}$, where this probability becomes unity in case of $M_p\ge M_s$. Hence, the mean service rate of any of the SUs is $ \min\{\frac{M_p}{M_s},1\}\beta$. If the number of bands is greater than or equal to the number of SUs, each user is assigned to one of the bands all the time. Hence, the mean service rate is characterized by the complement of the channel outage and the band availability (note that due to symmetry, all bands have the same availability probability). That is, the mean service rate of any SU is $\beta$. Combining all cases, the optimal assignment probability is $\omega_{jk}^*\!=\!\frac{1}{M_s}$, where $j\in \mathcal{M}_{\min\{M_p,M_s\}}$ and $\mathcal{M}_{\min\{M_p,M_s\}}\subseteq \mathcal{B}$ is any subset of the $M_p$ bands with cardinality $\min\{M_p,M_s\}$.\footnote{$\mathcal{M}_{\min\{M_p,M_s\}}$ can be any subset of bands with cardinality $\min\{M_p,M_s\}$. Hence, without loss of generality, we can assume that $\mathcal{M}_{\min\{M_p,M_s\}}\!=\!\{1,2,3,\dots,\min\{M_p,M_s\}\}$.} Hence, the maximum stable-throughput is characterized by
   \begin{equation}
\begin{split}
 \lambda^{\max}_s\!=\!\min\Bigg\{\frac{M_p}{M_s},1\Bigg\} \beta
 \label{ffgo}
    \end{split}
    \end{equation}
    Based on the optimal throughput of users, we can get the following conclusions. The throughput increases linearly with the increase of the number of bands, $M_p$, and decreases linearly with the number of SUs, $M_s$. Once the number of bands exceeds (or at least equals to) the number of users, i.e., $M_p\ge M_s$, the secondary achievable throughput becomes totally independent of the number of users and bands. Furthermore, the maximum stable-throughput decreases linearly with the arrival rate of the primary queues $\lambda_p$. That is, $\lambda^{\max}_{s} \propto \beta \propto (1-\lambda_p/\overline{P}_{{\rm out},p})$, where $\overline{P}_{{\rm out},p}$ is the complement of the channel outage between any PU and its respective receiver.

    The stability region of the secondary network in case of symmetric SUs and symmetric bands is given by
        The stability region is then given by
           \begin{equation}\label{2233344xxxxxxx}
   \mathcal{R}(\mathcal{S})\!=\!\bigg\{(\lambda_{s_1},\lambda_{s_2}, \dots, \lambda_{s_{M_s}}):\lambda_{s_k} <  \min\Bigg\{\frac{M_p}{M_s},1\Bigg\} \beta \ \forall k\bigg\}
\end{equation}
    \section{Random Allocation: System $\mathcal{\hat{S}}$}\label{comparisonsection}
    In this section, we consider the first system that we compare to the proposed system, which we refer to as {\it random selection of bands}. This system, denoted by $\mathcal{\hat{S}}$, needs less coordination and cooperation between SUs. Each SU chooses (selects) a primary band randomly at the beginning of the time slot. The probability that user $s_k$ chooses band $B_j$ is $\Gamma_{jk}$. It is clear that these probabilities satisfy the constraint
\begin{equation}
\sum_{j=1}^{M_p}\Gamma_{jk}\le 1,\, \forall k \in\{1,\dots,M_s\}
\end{equation}

It is possible in system $\mathcal{\hat{S}}$ that a band is left unassigned or that several SUs are competing on the same band. In this system, packet loss occurs due to collisions, when two or more users select the same band, as well as due to channel outages. The total number of assignment of SUs to bands is given by
\begin{equation}
|\mathcal{M}_{\hat {\mathcal{S}}}|\!=\!M_p^{M_s}
\end{equation}

 $\mathcal{\hat{S}}$ is less complex than $\mathcal{S}$ because it does not need coordination between the secondary terminals, while in $\mathcal{S}$ coordination is required to guarantee that one and only one user is given a specific band. Nevertheless, the complexity of obtaining the optimal assignments probability in  $\mathcal{\hat{S}}$ is much higher than system $\mathcal{S}$ because the optimization problem of system $\mathcal{\hat{S}}$ is nonconvex and the total number of optimization parameters is $M_p^{M_s}\gg M_p\times{M_s}$ for $M_p>2$.

 The access probabilities are obtained at a control unit (such as one of the SUs). After that the control unit supplies each user with the access/selection probability associated to each band. Upon having the selection probabilities, every time slot each user locally chooses one of the bands using the obtained probabilities. The randomness and distributed manner came from the fact that each user chooses one of the bands locally and without any coordination or cooperation. Accordingly, the possibility of collisions is high.
  We summarize MAC algorithm of system $\mathcal{\hat{S}}$ as shown in Algorithm~\ref{alg:S hat}.
\begin{algorithm} {}
\caption{$\mathcal{\hat{S}}$--MAC}
\begin{algorithmic} \label{alg:S hat}

\WHILE{TRUE}
\STATE{{\it Assignment}:}
\STATE{user $s_k$ selects band $B_{j}$ w.p. $\Gamma_{jk}$}

\STATE{{\it Primary}:}
\FOR{$\forall p_j$}
\IF{$Q_{p_j}$ not empty}
\STATE{transmit packet at head of $Q_{p_j}$}
\ENDIF
\ENDFOR

\FOR{$\forall p_j$}
\IF{ACK received}
\STATE{remove packet at head of $Q_{p_j}$}
\ENDIF
\ENDFOR

\STATE

\STATE{{\it Secondary}:}
\FOR{$\forall s_k$}
\IF{$Q_{s_k}$ is nonempty}
\STATE{sense $B_{m_k}$ for duration $\tau$}
\IF{$B_{m_k}$ idle}
\STATE{transmit packet at head of $Q_{s_k}$}
\ENDIF
\ENDIF
\ENDFOR

\FOR{$\forall s_k$}
\IF{ACK received}
\STATE{remove packet at head of $Q_{s_k}$}
\ENDIF
\ENDFOR

\ENDWHILE

\end{algorithmic}
\end{algorithm}

The mean service rate of the $j${th} PU is similar in systems $\mathcal{S}$ and $\mathcal{\hat{S}}$. We investigate now the service rate for the SUs. User $s_k$, when assigned to band $B_j$, succeeds in its transmission with probability $\overline{P}_{{\rm out},js_k}$ if the PU operating on $B_j$ has no packets to send and if all secondary terminals contending on the same band have empty queues. Recalling that the band assignment is represented by the $M_s$-tuple $(m_1,m_2,\dots,m_k=j,\dots,m_{M_s})$, the mean service rate of user $s_k$ is thus given by
\begin{equation}\label{genwoc}
\begin{split}
\mu_{s_k}\!=\!&\sum_{m_1\!=\!1}^{M_p}\sum_{m_2\!=\!1}^{M_p}...\sum_{m_{M_s}\!=\!1}^{M_p}\Bigg[\Gamma_{m_11}\Gamma_{m_22}...\Gamma_{m_{M_s}M_s}\ \mu_{jk}\\ & \,\,\,\,\,\,\,\,\,\,\,\,\,\,\,\,\,\,\,\,\,\,\,\,\,\,\,\,\,\,\,\,\,\,\,\,\,\,\,\,\,\,\,\,\,\,\,\,\,\,\,\,\,\,\,\,\,\ \times \,{\rm Pr}\Bigg\{ \bigcap_{ \substack{v\in\{1,2,\dots,M_s\} \\ v \neq k \\ m_v \!=\! j}} Q_{s_v} \!=\!0 \Bigg\} \Bigg]
\end{split}
\end{equation}
where the sums in (\ref{genwoc}) are over all possible assignments for every SU.

Due to the complexity of this system and the interaction of queues, we can only study the case of two SUs and one or two primary bands. To analyze the stability of the system's queues, we
resort to a stochastic dominance approach\footnote{It must be noted that stochastic dominance can be used to find the exact stability region only for the case where the assumption of saturation of one queue results in an independent queue system (dominant system) \cite{sadek}, which is true for the case of two SUs and one or two primary bands} \cite{sadek}, where one or set of the nodes is assumed to be saturated while the other nodes operate as they would in the original system. Analyzing the stability
of interacting queues is a difficult problem that has been
addressed for ALOHA systems initially. Characterizing the stable
throughput region for interacting queues is still an open problem \cite{sadek}.

  \subsection{Two SUs and Two Bands}
In this subsection, we focus on the case of two SUs and two PUs (two bands).
 At the beginning of the time slot, the PUs send the packet at the head of their queues. Each SU chooses a band with some probability independent of the other users. If the band is sensed to be idle, the SUs transmit the packet at the head of their queues. The mean service rates of the PUs are given by
\begin{equation}
\mu_{p_1}\!=\! \overline{P}_{{\rm out},{{p_1}}}, \
\mu_{p_2}\!=\! \overline{P}_{{\rm out},{{p_2}}}
\label{eqn50}
\end{equation}
The mean service rates of the SUs are given by
\begin{equation}
\begin{split}
\mu_{s_1}&\!=\!  {\rm Pr}\bigg\{Q_{s_2}\ne0\bigg\}\bigg[ \Gamma_{11}\Gamma_{22}  \mu_{11}  \!+\! \Gamma_{21}\Gamma_{12} \mu_{21}\bigg]\\& \,\,\,\,\,\,\,\,\,\,\,\,\,\,\,\,\,\,\,\,\,\,\,\,\,\,\,\,\,\,\,\,\,\,\,\,\,\,\ \!+\! \bigg[ \Gamma_{11} \mu_{11}  \!+\! \Gamma_{21} \mu_{21}\bigg]{\rm Pr}\bigg\{Q_{s_2}\!=\!0\bigg\}
\label{eqn60}
\end{split}
\end{equation}
\begin{equation}
\begin{split}
\mu_{s_2}&\!=\!  {\rm Pr}\bigg\{Q_{s_1}\!\ne\!0\bigg\}\bigg[ \Gamma_{12}\Gamma_{21}  \mu_{12} \!+\! \Gamma_{22}\Gamma_{11} \mu_{22}\bigg]\!\\&\,\,\,\,\,\,\,\,\,\,\,\,\,\,\,\,\,\,\,\,\,\,\,\,\,\,\ \!+\! \bigg[ \Gamma_{12} \mu_{12}  \!+\! \Gamma_{22} \mu_{22}\bigg]{\rm Pr}\bigg\{Q_{s_1}\!=\!0\bigg\}
\label{eqn70}
\end{split}
\end{equation}
Since the queues are interacting with each other, we resort to the idea of the dominant systems, where the analysis assumes that one of the nodes sends dummy packets when its queue is empty and all the other nodes behave exactly as they would in the original system. We construct two dominant systems and take the union over both of them to obtain the stability of the original system.
\subsubsection{First dominant system}
In the first dominant system, denoted by $\hat{\mathcal{S}}_1$, the queue of user $s_1$ sends dummy packets when it is empty and the other queues behave exactly as they would in the original system. The mean service rate of the SU $s_2$ is given by

\begin{equation}
\begin{split}
\mu_{s_2}&\!=\!   \Gamma_{12}\Gamma_{21}  \mu_{12} \!+\! \Gamma_{22}\Gamma_{11} \mu_{22}
\label{eqn700}
\end{split}
\end{equation}
The probability that the queue of the SU $s_2$ is empty is given by
\begin{equation}
\begin{split}
{\rm Pr}\bigg\{Q_{s_2}\!=\!0\bigg\}\!=\!1-\frac{\lambda_{s_2}}{\mu_{s_2}}
\label{eqn700}
\end{split}
\end{equation}
Therefore, the mean service rate of the SU $s_1$ is given by
\begin{equation}
\begin{split}
\mu_{s_1}&\!=\!  \frac{\lambda_{s_2}}{\mu_{s_2}} \bigg[ \Gamma_{11}\Gamma_{22}  \mu_{11}  \!+\! \Gamma_{21}\Gamma_{12} \mu_{21}\bigg]\\& \,\,\,\,\,\,\,\,\,\,\,\,\,\,\,\,\,\,\,\,\,\,\,\,\,\,\,\,\ \!+\! \bigg[ \Gamma_{11} \mu_{11}  \!+\! \Gamma_{21} \mu_{21}\bigg]\bigg(1\!-\!\frac{\lambda_{s_2}}{\mu_{s_2}}\bigg)
\label{eqn600}
\end{split}
\end{equation}
Based on the construction of the dominant system $\hat{\mathcal{S}}_1$, it can be noted that the lengths of the queues of the dominant system are
never less than those of the original system, provided they are
both initialized identically. This is because, in the dominant
system, node $s_1$ transmits dummy packets even if it does not have
any packets of its own; hence, prevents $s_2$ from transmitting its packets without collisions (or definite packet loss) when $s_1$ chooses the same band. Note that $s_1$ interferes with $s_2$ in all
cases that it would in the original system. Therefore, given that
$\lambda_{s_2}<   \Gamma_{12}\Gamma_{21}  \mu_{12}  + \Gamma_{22}\Gamma_{11} \mu_{22}$, if for some $\lambda_{s_1}$ the queue at
$s_1$ is stable in the dominant system, then the corresponding
queue in the original system must be stable; conversely, if for
some $\lambda_{s_1}$ in the dominant system, the node $s_1$ saturates, then
it will not transmit dummy packets, and as long as $s_1$ has
a packet to transmit, the behavior of the dominant system is
identical to that of the original system. Therefore, we
can conclude that the original system and the dominant system
are \textbf{indistinguishable} at the boundary points. The portion of
the stable-throughput region $\mathcal{R}(\hat{\mathcal{S}}_1)$ which is based on $\hat{\mathcal{S}}_1$ is obtained via solving a constrained
optimization problem to find the maximum feasible $\lambda_{s_1}$ corresponding
to each feasible $\lambda_{s_2}$ as $0\le \Gamma_{11},\Gamma_{12},\Gamma_{21},\Gamma_{22}\le 1$ under the constraints that $\Gamma_{21}\!+\!\Gamma_{11}\!=\!1$ and $\Gamma_{12}\!+\!\Gamma_{22}\!=\!1$. For a fixed $\lambda_{s_2}$, the maximum stable arrival rate for the secondary queue $s_1$ is given by solving the following optimization problem:

\begin{equation}
\begin{split}
\label{0900}
     & \underset{\substack{ {\Gamma_{11},\Gamma_{12}}\\{\Gamma_{21},\Gamma_{22}}}}{\max}\,\,\,\,\ \lambda_{s_1}\!=\!\frac{\lambda_{s_2}}{\mu_{s_2}} \bigg[ \Gamma_{11}\Gamma_{22}  \mu_{11} \!+\! \Gamma_{21}\Gamma_{12} \mu_{21}\bigg]\\&\,\,\,\,\,\,\,\,\,\,\,\,\,\,\,\,\,\,\,\,\,\,\,\,\,\,\,\,\,\,\,\,\,\,\ \!+\! \bigg[ \Gamma_{11} \mu_{11}   \!+\! \Gamma_{21} \mu_{21}\bigg]\bigg(1\!-\!\frac{\lambda_{s_2}}{\mu_{s_2}}\bigg)\\& \,\,\,\,\,\ {\rm s.t.}\,\,\,\,\ \Gamma_{21}\!+\!\Gamma_{11}\!=\!1,\,\ \Gamma_{12}\!+\!\Gamma_{22}\!=\!1\\& \,\,\,\,\,\,\,\,\,\,\,\,\,\,\,\,\,\,\,\,\,\,\,\  \lambda_{s_2}\!\le \! \mu_{s_2}\!=\!\Gamma_{12}\Gamma_{21}  \mu_{12}  \!+\! \Gamma_{22}\Gamma_{11} \mu_{22}
    \end{split}
    \end{equation}
    The optimization problem is nonconvex and can be solved numerically using a two dimensional grid search over $\Gamma_{21}$ and $\Gamma_{12}$ or $\Gamma_{22}$; or $\!\Gamma_{11}$ and $\Gamma_{12}$ or $\Gamma_{22}$, and using the linear constraints, $\Gamma_{21}\!+\!\Gamma_{11}\!=\!1$ and $\Gamma_{12}\!+\!\Gamma_{22}\!=\!1$, to obtain the other parameters.

    {\it \underline{Solution}:} We propose the following simple solution, which converts the problem to a linear program by fixing one of the optimization parameters. Substituting by the equality constraints, we get the optimization problem (\ref{dfgfggg}) at the top of the following page. For a fixed (given) $\Gamma_{12}$, the optimization problem is a linear fractional program on $\Gamma_{22}$, which can be converted to a linear program as explained in \cite[page 151]{boyed}. In our case, we have only one optimization variable for a fixed $\Gamma_{21}$. Therefore, the problem can be readily solved.
     The optimization problem for a fixed $\Gamma_{21}$ is given by


        \begin{figure*}[!t]
\normalsize
\setcounter{equation}{46}
\begin{equation}\label{dfgfggg}
\begin{split}
     & \underset{\substack{ {\Gamma_{21},\Gamma_{22}}}}{\max.}\,\,\,\,\ \frac{\lambda_{s_2}\bigg[ \overline{\Gamma_{21}}\Gamma_{22}  \mu_{11} \!+\! \Gamma_{21}\overline{\Gamma_{22}} \mu_{21}\bigg] \!+\! \bigg[ \overline{\Gamma_{21}} \mu_{11}   \!+\! \Gamma_{21} \mu_{21}\bigg]\Bigg(\Gamma_{21}  \mu_{12}  \!+\! \Gamma_{22}\Big[\overline{\Gamma_{21}} \mu_{22}-\Gamma_{21}  \mu_{12}\Big]\!-\!{\lambda_{s_2}}\Bigg)}{\Gamma_{21}  \mu_{12}  \!+\! \Gamma_{22}\Big[\overline{\Gamma_{21}} \mu_{22}-\Gamma_{21}  \mu_{12}\Big]}\\& \,\,\,\,\,\ {\rm s.t.}\,\,\,\,\  \,\,\,\,\,\,\,\,\,\,\,\,\,\,\,\,\,\,\,\,\,\,\,\  \lambda_{s_2}\!\le \! \!\Gamma_{21}  \mu_{12}  \!+\! \Gamma_{22}\Big[\overline{\Gamma_{21}} \mu_{22}-\Gamma_{21}  \mu_{12}\Big]
 \end{split}
\end{equation}
\hrulefill
\end{figure*}
        \begin{equation}
        \label{dfgfggg1}
\begin{split}
     & \underset{\substack{ {\Gamma_{22}}}}{\max.}\,\,\,\,\ \frac{\Gamma_{22}\bigg[ (1-\Gamma_{21})  \mu_{11}-\Gamma_{21} \mu_{21}\bigg] \!-\!  (1-\Gamma_{21}) \mu_{11}}{\Gamma_{21}  \mu_{12}  \!+\! \Gamma_{22}[(1-\Gamma_{21}) \mu_{22}-\Gamma_{21}  \mu_{12}]}\\& \,\,\,\,\,\ {\rm s.t.}\,\,\,\,\    \lambda_{s_2}-\!\Gamma_{21}  \mu_{12}  \!\le  \Gamma_{22} [(1-\Gamma_{21}) \mu_{22}-\Gamma_{21}  \mu_{12}]
    \end{split}
    \end{equation}
     The problem can be rewritten as follows:
        \begin{equation}
        \label{dfgfggg2}
\begin{split}
     & \underset{\substack{ {\Gamma_{22}}}}{\max.}\,\,\,\,\ \frac{\Gamma_{22}K_1 \!-\!  K_2}{D  \!+\! \Gamma_{22}C}\\& \,\,\,\,\,\ {\rm s.t.}\,\,\,\,\  \lambda_{s_2}-\!\Gamma_{21}  \mu_{12}  \!\le  \Gamma_{22} C
    \end{split}
    \end{equation}
        where $C=[(1-\Gamma_{21}) \mu_{22}-\Gamma_{21}  \mu_{12}]$, $D=\Gamma_{21}  \mu_{12}$, $K_1= (1-\Gamma_{21})  \mu_{11}-\Gamma_{21} \mu_{21}$ and $K_2=(1-\Gamma_{21}) \mu_{11}$. The solution of optimization problem (\ref{dfgfggg2}) is provided in Appendix A.

  Under the proposed technique, we solve a family of linear-fractional programs parameterized by the $\Gamma_{21}$. The optimal solution is then obtained by taking the union over all these linear-fractional programs. For similar technique to find the optimal solution, the reader is referred to \cite{tvtjournal,ourletter}.

\subsubsection{The second dominant system}
In the second dominant system, $\hat{\mathcal{S}}_2$, the queue of user $s_2$ sends dummy packets when it is empty and the other queues behave exactly as they would in the original system. Consequently,
\begin{equation}
\begin{split}
\mu_{s_1}&\!=\!   \Gamma_{11}\Gamma_{22}  \mu_{11}  \!+\! \Gamma_{21}\Gamma_{12} \mu_{21}
\label{eqn60}
\end{split}
\end{equation}

The probability that the queue of the SU $s_1$ is empty is given by
\begin{equation}
\begin{split}
{\rm Pr}\bigg\{Q_{s_1}\!=\!0\bigg\}\!=\!1-\frac{\lambda_{s_1}}{\mu_{s_1}}
\label{eqn7000}
\end{split}
\end{equation}
The mean service rate of user $s_2$ is then given by
\begin{equation}
\begin{split}
\mu_{s_2}&\!=\!  \frac{\lambda_{s_1}}{\mu_{s_1}}\bigg[ \Gamma_{12}\Gamma_{21}  \mu_{12}  \!+\! \Gamma_{22}\Gamma_{11} \mu_{22}\bigg]\\& \,\,\,\,\,\,\,\,\,\,\,\,\,\,\,\,\,\,\,\,\,\,\,\,\,\,\,\,\ \!+\! \bigg[ \! \Gamma_{12} \mu_{12}  \!+\! \Gamma_{22} \mu_{22}\!\bigg]\bigg(1\!-\!\frac{\lambda_{s_1}}{\mu_{s_1}}\bigg)
\label{eqn70e}
\end{split}
\end{equation}

The stability regions of the original system and $\hat{\mathcal{S}}_2$ are indistinguishable at the boundary points. For a fixed $\lambda_{s_1}$, the maximum stable arrival rate for the secondary queue $s_2$ is given by solving the following optimization problem:
\begin{equation}
\begin{split}
\label{pogo}
     &\underset{ \substack{{\Gamma_{11},\Gamma_{12}}\\{\Gamma_{21},\Gamma_{22}}}}{\max.}\,\,\,\ \lambda_{s_2}\!=\!\frac{\lambda_{s_1}}{\mu_{s_1}}\bigg[ \Gamma_{12}\Gamma_{21}  \mu_{12}   \!+\! \Gamma_{22}\Gamma_{11} \mu_{22}\bigg]\\&\,\,\,\,\,\,\,\,\,\,\,\,\,\,\,\,\,\,\,\,\,\,\,\,\,\ \!+\! \bigg[ \Gamma_{12} \mu_{12}  \!+\! \Gamma_{22} \mu_{22}\bigg]\bigg(1-\frac{\lambda_{s_1}}{\mu_{s_1}}\bigg)\\&\,\,\,\,\ {\rm s.t.} \,\,\,\,\,\,\,\,\,\,\,\,\,\,\,\ \Gamma_{21}\!+\!\Gamma_{11}\!=\!1, \,\ \Gamma_{12}\!+\!\Gamma_{22}\!=\!1\\& \,\,\,\,\,\,\,\,\,\,\,\,\,\,\,\,\,\,\,\,\,\,\,\  \lambda_{s_1}\!\le \!\mu_{s_1}\!=\!  \Gamma_{11}\Gamma_{22}  \mu_{11} \!+\! \Gamma_{21}\Gamma_{12} \mu_{21}
    \end{split}
    \end{equation}
       Similar to the first dominant system optimization problem, (\ref{pogo}) can be readily solved.
       This problem can be solved in a similar fashion to (\ref{dfgfggg2}).

       The maximum stable-throughput region of system $\mathcal{\hat{S}}$ is given by the union over the stability sets of the two dominant systems \cite{sadek}, i.e., $\mathcal{R}(\mathcal{\hat{S}})\!=\!\mathcal{R}(\hat{\mathcal{S}}_1)\bigcup \mathcal{R}(\hat{\mathcal{S}}_2)$.
\subsection{The Case of Two SUs and One Primary Band}
This case can be deduced from the previous case by assuming that $\pi_2\!=\!0$. It can be shown that $\mu_{s_1}$ can be rewritten as

\small \begin{equation}\label{serv11}
\begin{split}
         \setcounter{equation}{54}
\mu_{s_1}&\!=\!  {\rm Pr}\bigg\{Q_{s_2}\ne0\bigg\}  \Gamma_{11}\Gamma_{22} \mu_{11}   \!+\!  \Gamma_{11} \mu_{11} {\rm Pr}\bigg\{Q_{s_2}\!=\!0\bigg\}
\end{split}
\end{equation} \normalsize
\noindent Similarly,
\small \begin{equation}\label{serv12}
\begin{split}
   \mu_{s_2}&\!=\!  {\rm Pr}\bigg\{Q_{s_1}\!\ne\!0\bigg\}\mu_{12} \Gamma_{12}\Gamma_{21}  \!+\!  \Gamma_{12} \mu_{11} {\rm Pr}\bigg\{Q_{s_1}\!=\!0\bigg\}
   \end{split}
\end{equation} \normalsize
\noindent Since the two queues are interacting with each other, we resort to the idea of dominant systems to obtain the stability region.

\subsubsection{First dominant system}
In the first dominant system $\mathcal{\hat{S}}_{1}$, $s_1$ transmits dummy packets when its queue is empty, and $s_2$ behaves exactly as it would in the original system. The mean service rate of $Q_2$ is given by
\small \begin{equation}\label{serv22}
\begin{split}
 \setcounter{equation}{56}
   \mu_{s_2}&\!=\!  \Gamma_{12}\Gamma_{21}  \mu_{12}
   \end{split}
\end{equation} \normalsize
\noindent Since ${\rm Pr}\{Q_{s_2}\!=\!0\}\!=\!1-\frac{\lambda_{s_2}}{\mu_{s_2}}$ with $\mu_{s_2}$ given by (56), $\mu_{s_1}$ can be written as

\small \begin{equation}\label{serv21}
   \mu_{s_1}\!=\!\mu_{11}\bigg[\Gamma_{11}  \Gamma_{22}  \frac{\lambda_{s_2}}{\mu_{s_2}}\!+\! (1-\frac{\lambda_{s_2}}{\mu_{s_2}})   \Gamma_{11} \bigg]
\end{equation} \normalsize
\noindent Using the same argument discussed in the previous Subsections, we find the closure of the rate pairs $(\lambda_{s_1},\lambda_{s_2})$. The optimization problem for a fixed $\lambda_{s_2} \le \mu_{s_2}\le \mu_{12}$ can be formulated as

\small \begin{equation}
\begin{split}
& \underset{\Gamma_{11},\Gamma_{12},\Gamma_{21},\Gamma_{22}}{\max.} \,\,\lambda_{s_1}\!=\!   \mu_{s_1}\!=\!\mu_{11}\bigg[\Gamma_{11}  \Gamma_{22}  \frac{\lambda_{s_2}}{\mu_{s_2}}\!+\! (1-\frac{\lambda_{s_2}}{\mu_{s_2}})   \Gamma_{11} \bigg]\\
&\,\,{\rm s.t.} \,\,\,\,\, \Gamma_{ij} \geq 0, i\!=\!1,2, j\!=\!1,2,\,\,\,\,\Gamma_{11}\!+\!\Gamma_{21}\!=\!1,\,\,\Gamma_{12}\!+\!\Gamma_{22}\!=\!1,\,\ \\ & \,\,\,\,\,\,\,\ \lambda_{s_2} \leq  \mu_{s_2}=\Gamma_{12}\Gamma_{21}  \mu_{12}
\end{split}
\label{optprobx}
\end{equation} \normalsize
Letting $K\!=\!\frac{\lambda_{s_2}}{\mu_{12}}$, and using the equality constraints, (\ref{optprobx}) can be expressed as
\small \begin{equation}
\begin{split}
& \underset{\Gamma_{12},\Gamma_{21}}{\min.}\,\,\,\,\Gamma_{21}\!+\!\frac{K}{\Gamma_{21}},
\,\,{\rm s.t.} \,\,\,\, \frac{K}{\Gamma_{21}}- \Gamma_{12} \leq 0,\,\, 0 \leq \Gamma_{12} \leq 1,\,\, 0 \leq \Gamma_{21} \leq 1
\end{split}
\label{optran}
\end{equation} \normalsize
\noindent Note that $K \leq \Gamma_{12} \Gamma_{21} \leq 1$. It can be shown that problem (\ref{optran}) is a convex optimization problem, which can be solved using the Lagrangian formulation.\footnote{The objective function can be shown to be convex by checking the sign of the eigenvalues of the Hessian matrix. Ditto for the constraint function $\frac{K}{\Gamma_{21}}- \Gamma_{12}$. The other inequality constraints are linear.} The optimal probabilities are
\small \begin{equation}
\Gamma^*_{11}\!=\!1-\min\bigg\{\sqrt{\frac{{\lambda_{s_2}}}{{\mu_{12}}}},1\bigg\},\Gamma^*_{12} \!=\! 1,\,\,\Gamma^*_{21}\!=\!1-\Gamma^{*}_{11},\,\,\Gamma^*_{22}\!=\!0
\label{gamopt1}
\end{equation} \normalsize
The maximum stable-throughput region of the first dominant system $\mathcal{R}(\mathcal{\hat{S}}_{1})$ is given by
\small \begin{equation}\label{2234443}
\mathcal{R}(\mathcal{\hat{S}}_{1})= \!\bigg\{(\lambda_{s_1},\lambda_{s_2}): \sqrt{\frac{\lambda_{s_1}}{\mu_{11}} }\!+\!\sqrt{\frac{\lambda_{s_2}}{\mu_{12}}}<1 \bigg\}
\end{equation} \normalsize
\subsubsection{Second dominant system}
In the second dominant system $\mathcal{\hat{S}}_{2}$, $s_2$ transmits dummy packets when its queue is empty, whereas $s_1$ operates exactly as it would in the original system. Following the analysis of $\mathcal{\hat{S}}_{1}$, we obtain the following results
\small \begin{equation}
\Gamma^*_{11} \!=\! 1, \,\,\Gamma^*_{12}\!=\!1-\min\big\{\sqrt{\frac{{\lambda_{s_1}}}{{ \mu_{11}} }},1\big\},\,\,
\Gamma^*_{21}\!=\!0, \,\,\Gamma^*_{22}\!=\! 1-\Gamma^{*}_{12}
\label{gamopt2}
\end{equation} \normalsize
\noindent The stability region of the second dominant system $\mathcal{R}(\mathcal{\hat{S}}_{2})$ is given by
\small \begin{equation}\label{2233445}
   \mathcal{R}(\mathcal{\hat{S}}_{2})\!=\!\bigg\{(\lambda_{s_1},\lambda_{s_2}): \sqrt{\frac{\lambda_{s_1}}{\mu_{11}}}\!+\!\sqrt{\frac{\lambda_{s_2}}{\mu_{12}}}<1 \bigg\}
\end{equation} \normalsize
\noindent The stability region of system $\mathcal{\hat{S}}$ is $\mathcal{R}(\mathcal{\hat{S}})\!=\!\mathcal{R}(\mathcal{\hat{S}}_1)\bigcup \mathcal{R}(\mathcal{\hat{S}}_2)$. That is,
\small \begin{equation}\label{223344885}
      \mathcal{R}(\mathcal{\hat{S}})=   \mathcal{R}(\mathcal{\hat{S}}_{1})=\mathcal{R}(\mathcal{\hat{S}}_{2})\!=\!\bigg\{(\lambda_{s_1},\lambda_{s_2}): \sqrt{\frac{\lambda_{s_1}}{\mu_{11} }}\!+\!\sqrt{\frac{\lambda_{s_2}}{\mu_{12}}}<1 \bigg\}
\end{equation} \normalsize
We note that the stability region is not convex. This means that an increase in the maximum
rate of one SU implies a disproportionate decrease of the other.

\begin{proposition}\label{pro2}
For any network with $M_s$ SUs and $M_p$ orthogonal primary bands, the stability region of system $\mathcal{S}$, $\mathcal{R}(\mathcal{S})$, contains
that of $\hat{\mathcal{S}}$, $\mathcal{R}(\mathcal{\hat{S}})$. That is, $\mathcal{R}(\mathcal{\hat{S}}) \subseteq  \mathcal{R}(\mathcal{S})$.
\end{proposition}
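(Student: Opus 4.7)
The plan is to prove the inclusion by a constructive argument on the sample space underlying $\mathcal{\hat{S}}$, so that neither independence of the secondary queues nor a direct solution of~\eqref{eqn:opt2} is required. Starting from a rate vector in $\mathcal{R}(\mathcal{\hat{S}})$ stabilized by some selection matrix $(\Gamma_{jk})$, I read off an allocation matrix $\Omega^\star=(\omega^\star_{jk})$ from the joint space of SU queue states and independent band selections, and show that $\Omega^\star$ is feasible for~\eqref{eqn:opt2} and returns exactly the given rates when used inside $\mathcal{S}$.

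Let $p_k=\lambda_{s_k}/\mu_{s_k}^{(\mathcal{\hat{S}})}\in[0,1]$ denote the stationary busy probability of $Q_{s_k}$ under $\mathcal{\hat{S}}$, obtained by the late-arrival Markov analysis already used in~\eqref{eqn1xoo}, and let $m_v$ denote the band independently chosen by SU $s_v$ in a generic slot. I define
\begin{equation*}
\omega^\star_{jk}\;=\;{\rm Pr}\bigl\{Q_{s_k}\neq 0,\;m_k=j,\;\forall v\neq k:\;m_v\neq j\;\text{or}\;Q_{s_v}=0\bigr\},
\end{equation*}
the probability that SU $s_k$ is active, picks band $B_j$, and is the unique active contender on $B_j$. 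Rewriting~\eqref{genwoc} by conditioning on $Q_{s_k}\neq 0$ and regrouping the selection sum gives the key identity
\begin{equation*}
p_k\,\mu_{s_k}^{(\mathcal{\hat{S}})}\;=\;\sum_{j\in\mathcal{B}}\omega^\star_{jk}\,\mu_{jk}.
\end{equation*}

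The feasibility of $\Omega^\star$ for~\eqref{eqn:opt2} then reduces to two disjointness observations made directly on the sample space. For fixed $j$, the events ``$s_k$ is the unique active contender on $B_j$'' are mutually exclusive across $k$ (only one SU can play this role at a time), hence $\sum_k\omega^\star_{jk}\le 1$; for fixed $k$ they are mutually exclusive across $j$ (each SU picks a single band), hence $\sum_j\omega^\star_{jk}\le p_k\le 1$. Non-negativity is immediate. Combining the queueing identity $\lambda_{s_k}=p_k\,\mu_{s_k}^{(\mathcal{\hat{S}})}$ with the key identity above yields $\lambda_{s_k}=\sum_{j\in\mathcal{B}}\omega^\star_{jk}\mu_{jk}$, so every stability constraint of~\eqref{eqn:opt2} is met (with equality) and the rate vector lies in the closed region $\mathcal{R}(\mathcal{S})$.

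The main obstacle I foresee is not the disjointness step, which collapses to one line once the sample-space description is in place, but the careful rewriting of~\eqref{genwoc} into the key identity: \eqref{genwoc} contains a joint empty-queue probability and a sum over all $M_p^{M_s}$ selection tuples, and the rearrangement must be presented as a change of variables on the same probability space in order to avoid tacitly assuming independence of the secondary queues. Virtual bands when $M_s>M_p$ are absorbed by extending the band index to $j=0$ with $\mu_{0k}=0$, leaving the rest of the argument unchanged.
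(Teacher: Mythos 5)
Your route is genuinely different from the paper's. The paper's Appendix~B fixes the set $\mathcal{N}$ of nonempty secondary queues, writes the conditional service rate of $s_k$ under $\mathcal{\hat{S}}$ as $\sum_j \mu_{jk}\Gamma_{jk}\prod_{v\in\mathcal{N},v\ne k}(1-\Gamma_{jv})$, and argues that because these coefficients sum (over $j$) to at most $1$ while $\sum_j\omega_{jk}=1$, one can pick $\omega_{jk}\ge \Gamma_{jk}\prod_{v\ne k}(1-\Gamma_{jv})$ and still respect the column constraints; i.e., it is a per-occupancy-pattern dominance argument. You instead average over the stationary law of the queue states and read off a single fixed allocation matrix $\Omega^\star$ (the probability of being the \emph{unique active contender} on band $j$), then verify feasibility by two disjointness observations. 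This buys two things the paper's argument does not deliver cleanly: (i) your $\Omega^\star$ does not depend on the realized occupancy set $\mathcal{N}$, so it is a legitimate, time-invariant policy for system $\mathcal{S}$ as defined, whereas the paper's $\omega_{jk}$ implicitly varies with $\mathcal{N}$; and (ii) your argument applies to any rate vector actually stabilizable under $\mathcal{\hat{S}}$, not only to the dominant-system characterization. The disjointness checks for the row and column sums of $\Omega^\star$ are correct.

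The one step that does not survive scrutiny as written is the ``key identity'' $p_k\,\mu^{(\mathcal{\hat{S}})}_{s_k}=\sum_j\omega^\star_{jk}\mu_{jk}$. In \eqref{genwoc} the joint empty-queue probability is \emph{not} conditioned on $Q_{s_k}\neq 0$, so multiplying \eqref{genwoc} by $p_k$ and matching it to $\omega^\star_{jk}$ factors $\Pr\{Q_{s_k}\neq 0\}$ out of a joint expectation --- precisely the independence of interacting queues you set out to avoid (and for the same reason the auxiliary relation $\Pr\{Q_{s_k}=0\}=1-\lambda_{s_k}/\mu^{(\mathcal{\hat{S}})}_{s_k}$ is itself only exact in a dominant system). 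The repair is to delete both intermediaries and obtain $\lambda_{s_k}=\sum_{j}\omega^\star_{jk}\mu_{jk}$ directly from flow conservation: in steady state the long-run departure rate of a stable queue equals its arrival rate, and a departure occurs exactly on the event defining $\omega^\star_{jk}$ intersected with band availability and non-outage, which are independent of the secondary state and contribute the factor $\mu_{jk}$. With that substitution your proof closes (up to the boundary technicality that you obtain equality rather than the strict inequality in \eqref{eqn:opt2}, so the point lands in the closure of $\mathcal{R}(\mathcal{S})$; the paper is equally loose on this point).
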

\begin{proof}
See Appendix B.
\end{proof}
     \section{Fixed Allocation: System $\mathcal{S}^{\left(\rm F\right)}$}\label{deterministic1000}
     In this system, denoted by $\mathcal{S}^{\left(\rm F\right)}$, each SU is assigned to a certain band individually all the time, i.e., every SU is permanently and uniquely assigned one of the primary bands. Hence, this system requires that $M_p\ge M_s$. Using the notation used for system $\mathcal{S}$, let $\Pi_n$ represent a permutation on $(m_1,m_2,m_3,\dots,m_{M_s})$. Also, let $d(\Pi_n)$ denote a mapping function that maps the SU $s_1$ to band $m_1$ and user $s_2$ to band $m_2$ and so on, where $m_k\ne m_\ell, \ \forall k,\ell$.

     The average service rates of the secondary queues are given by
     \begin{equation}
\mu_{s_k}\!=\! \mu_{{m_k} {s_k}}
\label{eqn1000}
\end{equation}
where $\mu_{m_k {s_k}}$ is the mean service rate for SU $s_k$ given that band $B_{m_k}$ is allocated to it and $k\!\in\{1,2,\dots,M_s\}$ and $m_k\!\in\{1,2,\dots,M_p\}$. The stability region for the case $d(\Pi_n)$, $\mathcal{R}\big(d(\Pi_n)\big)$, is given by

   \begin{equation}\label{analy333}
   \begin{split}
  &\mathcal{R}\big(d(\Pi_n)\big)\!\\& \!=\!\bigg\{(\lambda_{s_1},\lambda_{s_2},\dots,\lambda_{s_{M_s}}): 0\!<\!\lambda_{s_k}\!<\! \mu_{m_k {s_k}} \forall k\!\in\!\{1,2,\dots,M_s\}\! \bigg\}
   \end{split}
\end{equation}
    with all assignments of users are \textbf{distinct} to each other, i.e., $m_k\ne m_\ell$ $\forall k,\ell$. The stability region of this system given a certain allocation permutation is an \textbf{orthotope} (hyper-rectangle) region, which is \textbf{convex}.
%

    In case of two SUs and two bands, i.e., $M_s\!=\!M_p\!=\!2$, the stability region of system $\mathcal{S}^{\left(\rm F\right)}$, using the mapping functions $d(2,1)$ and $d(1,2)$, are given by
      \begin{equation}\label{analy34}
   \mathcal{R}\big(d(2,1)\big)\!=\!\bigg\{(\lambda_{s_1},\lambda_{s_2}):\lambda_{s_2} < \mu_{12}, \,\,\ 0<\lambda_{s_1}<  \mu_{21} \bigg\}
\end{equation}
      \begin{equation}\label{analy34}
   \mathcal{R}\big(d(1,2)\big)\!=\!\bigg\{(\lambda_{s_1},\lambda_{s_2}):\lambda_{s_2} < \mu_{11}, \,\,\ 0<\lambda_{s_1}<  \mu_{22} \bigg\}
\end{equation}
 Depicted in Fig. \ref{deterministic}, the two user two band case stability region of the system $\mathcal{S}^{\left(\rm F\right)}$.

\begin{proposition}\label{pro1}
For $M_s$ SUs and $M_p\ge M_s$ bands, the stability regions of system $\mathcal{S}$ and $\mathcal{\hat{S}}$ contain that of a fixed assignment.
\end{proposition}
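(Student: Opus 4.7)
The plan is to show that any fixed-assignment permutation is a degenerate case of both $\mathcal{S}$ and $\hat{\mathcal{S}}$, so the corresponding service rates are achievable in both systems and hence the fixed-assignment stability region is contained in both $\mathcal{R}(\mathcal{S})$ and $\mathcal{R}(\hat{\mathcal{S}})$. The key observation is that with $M_p \ge M_s$, a valid fixed permutation $d(\Pi_n)=(m_1,\dots,m_{M_s})$ has all $m_k$ distinct, which means no collisions occur under this assignment in either framework.

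For the inclusion $\mathcal{R}(\mathcal{S}^{(\rm F)}) \subseteq \mathcal{R}(\mathcal{S})$, I would argue directly from the optimization formulation \eqref{eqn:opt1}. Given the permutation $\Pi_n$ corresponding to the fixed assignment, the choice $q(\Pi_n)=1$ and $q(\Pi_m)=0$ for $m\neq n$ is feasible (it satisfies $\sum_m q(\Pi_m)=1$ and $q(\Pi_m)\ge 0$). Substituting into \eqref{eqn6}, this yields $\mu_{s_k} = \mu_{m_k s_k}$ for every $k$, which coincides exactly with the service rates \eqref{eqn1000} of the fixed system. Therefore every rate vector in $\mathcal{R}\big(d(\Pi_n)\big)$ is stabilizable by $\mathcal{S}$ using this degenerate distribution, giving $\mathcal{R}\big(d(\Pi_n)\big)\subseteq \mathcal{R}(\mathcal{S})$ for every permutation. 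In fact, since $\mathcal{R}(\mathcal{S})$ is convex, it also contains the convex hull of the union over all fixed permutations.

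For the inclusion $\mathcal{R}(\mathcal{S}^{(\rm F)}) \subseteq \mathcal{R}(\hat{\mathcal{S}})$, I would specialize the selection probabilities of $\hat{\mathcal{S}}$ to the deterministic choice $\Gamma_{m_k k}=1$ and $\Gamma_{jk}=0$ for $j\neq m_k$. Because the $m_k$ are distinct, the indicator event in \eqref{genwoc} that requires all other SUs assigned to band $m_k$ to have empty queues is vacuously satisfied with probability one, and the only surviving term in the sum is the one corresponding to the fixed permutation. Thus $\mu_{s_k}=\mu_{m_k s_k}$, matching the fixed-assignment service rates, and hence $\mathcal{R}\big(d(\Pi_n)\big)\subseteq \mathcal{R}(\hat{\mathcal{S}})$.

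The argument is essentially a feasibility/embedding observation rather than a hard estimate, so there is no serious obstacle. The only point needing care is verifying the interacting-queue service rate \eqref{genwoc} reduces cleanly under the deterministic $\Gamma$ assignment; this relies on the distinctness of the $m_k$, which is precisely where the hypothesis $M_p\ge M_s$ enters (otherwise a fixed assignment would force collisions and the embedding would fail). Once that is confirmed, taking the union over all admissible permutations gives the stated containment for the full fixed-assignment region.
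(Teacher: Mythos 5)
Your proof is correct and follows the same route as the paper: both argue that a fixed assignment is a degenerate special case of $\mathcal{S}$ (by setting $q(\Pi_n)=1$ for the chosen permutation) and of $\hat{\mathcal{S}}$ (by setting $\Gamma_{m_k k}=1$), so the fixed-assignment service rates are achievable in both systems. Your version merely spells out in more detail the reduction of \eqref{genwoc} under the deterministic $\Gamma$ and the role of $M_p\ge M_s$ in guaranteeing distinct assignments, which the paper leaves implicit.
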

\begin{proof}
The fixed assignment system is a special case of system $\mathcal{S}$ corresponding to the case where the probability $q(\Pi_n)$ of the assignment of a certain permutation is unity and all the other probabilities are zero. In addition, the fixed assignment system is a special case of system $\hat{\mathcal{S}}$ with  $\Gamma_{jk}$ set to unity when band $B_j$ is allocated to $s_k$ and zero otherwise. Therefore, both systems $\mathcal{S}$ and $\hat{\mathcal{S}}$ are superior to a fixed assignment.
\end{proof}

\vspace{-0.2cm}
\section{Numerical Results and Simulations}\label{numerical}
\vspace{-0.1cm}
We provide here some insightful numerical results for the systems presented in this work. Let $d(m_1,m_2)$ denote the fixed allocation of user $s_1$ to band $m_1$ and user $s_2$ to band $m_2$ in a system with $M_s\!=\!M_p\!=\!2$. Fig.\ \ref{fig:subfig1} provides a comparison between the stability regions of systems $\mathcal{S}$, $\hat{\mathcal{S}}$, $d(1,2)$ and $d(2,1)$. The parameters used to generate the figure are: $\overline{P}_{{\rm out},2{s_1}}\!=\!0.8$, $\overline{P}_{{\rm out},2{s_2}}\!=\!0.9$, $\overline{P}_{{\rm out},1{s_1}}\!=\!0.7$, $\overline{P}_{{\rm out},1{s_2}}\!=\!0.85$, and the bands availability are $\pi_1=\!1\!-\!\frac{\lambda_{p_1}}{\overline{P}_{{\rm out},p_1}}\!=\! 0.25$ and $\pi_2\!=\!1\!-\!\frac{\lambda_{p_2}}{\overline{P}_{{\rm out},p_2}}\!=\!0.875$. From the figure, the advantage of system $\mathcal{S}$ and $\hat{\mathcal{S}}$ over the deterministic assignment is noted. Also, the advantage of $\mathcal{S}$ over all the considered systems is noted. It can be noted that, the performances of all systems are equivalent at low values of $\lambda_{s_1}$ and low values of $\lambda_{s_2}$. This is because the assignment of users at such cases is deterministic (fixed). We can precisely say that the assignment in those cases is deterministic where the user with low arrival rate is assigned to the band which provides a service rate that merely maintains its stability. The fixed assignment is optimal when $\lambda_{s_1}\le \mu_{11}=0.175$ packets/slot and when $\lambda_{s_2}\le \mu_{12}=0.2125$ packets/slot. We note that for $\lambda_{s_1}> \mu_{11}$, the stable-throughput of user $s_2$ in system $\hat{\mathcal{S}}$ starts to degrade significantly. This is because the arrival rate to user $s_1$ increases and the possibility of collisions increases due to the selection of the same band; hence, packets loss increases and data retransmission is needed. Therefore, the achievable throughput for user $s_2$ is low. This does not happen in case of system $\mathcal{S}$ because collisions never occur.

\begin{figure}
\center
  \includegraphics[width=1\columnwidth]{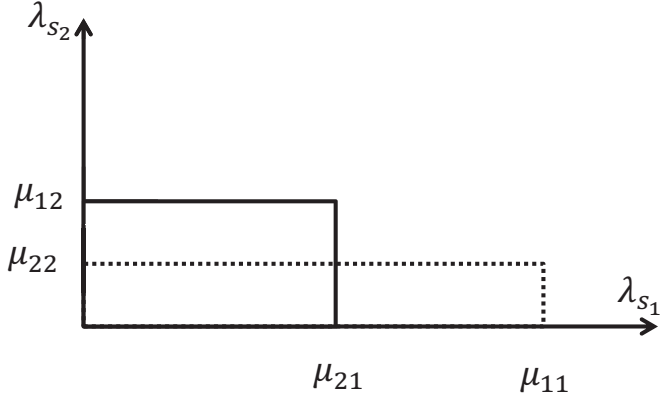}\\
  \caption{Stability of the deterministic system in case of two SUs and two bands. The two deterministic assignment possibilities are depicted in the figure, the solid one for $\mathcal{R}(d(2,1))$ and the dotted one is $\mathcal{R}(d(1,2))$. The figure is generated under the assumption that $\mu_{12}>\mu_{22}$ and  $\mu_{11}>\mu_{21}$.}\label{deterministic}
\end{figure}

Fig. \ref{fig:subfig2} shows the stability region of system $\mathcal{S}$ in case of $M_s=4$ and $M_p=5$. The figure reveals the impact of increasing the mean arrival rate of users $s_3$ and $s_4$ on the stability region of users $s_1$ and $s_2$. As shown in the figure, the increase in the mean arrival rates of users $s_3$ and $s_4$ reduces the stability region of users $s_1$ and $s_2$. The parameters used to generate the figure are depicted in the figure's caption and Table \ref{table2}. Fig. \ref{fig:subfig3} presents the optimal assignments probabilities for system $\mathcal{S}$ for the given parameters in the figure's caption. The parameters used to generate the figure are: $M_s\!=\!M_p\!=\!3$, $\lambda_{s_3}\!=\!\lambda_{s_4}\!=\!0.35$ packets per time slot and the first three rows and columns of users $s_1$, $s_2$ and $s_3$ in Table \ref{table2}. It can be noted that as the mean arrival rate of the second user, $s_2$, increases, $q^*(1,3,2)$ and $q^*(2,3,1)$ increase as well, which can be interpreted as the fraction of time slots that user $s_2$ is allocated to the third band. This is because the third band provides the highest $\mu_{jk}$ for user $s_2$, i.e., $\mu_{32}>\mu_{j2}$ for $j\!\in\!\{1,2\}$, and user $s_2$ needs to increase its service rate to maintain its queue stability. Similarly, as the mean service rate of user $s_1$ increases, the probabilities $q^*(3,2,1)$ and $q^*(3,1,2)$ increase for the same reason mentioned before for user $s_2$. Note that the summation of $q^*(1,3,2)$ and $q^*(2,3,1)$ results in $\omega_{32}$, and the summation of $q^*(3,2,1)$ and $q^*(3,1,2)$ results in $\omega_{31}$. From the figures, we note the convexity of the stability region of $\mathcal{S}$ and its envelope. This actually verifies our observations about the convexity of the stability region of $\mathcal{S}$.
\begin{figure}
	\centering
		\includegraphics[width=1\columnwidth]{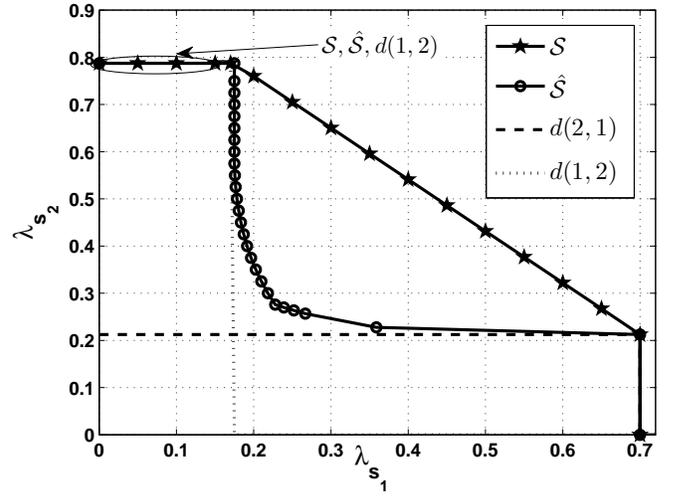}
	\caption{Stability regions of the considered systems.}
	\label{fig:subfig1}
\end{figure}

\section{conclusions and Future Work}\label{conc}
We have proposed a band allocation scheme for buffered cognitive radio users in presence of orthogonal licensed primary bands each of which assigned to a PU.
The cognitive radio users are allocated to bands based on their queue stability requirements.
We have proved the advantage of the proposed scheme over some well-known schemes.

\begin{table*}
\begin{center}
\caption{The complement of channels outage for the secondary nodes and the bands availability of the primary bands used to generate Figs. \ref{fig:subfig2} and \ref{fig:subfig3}.}
\label{table2}
\begin{tabular}{|c|c|c|c|c|}
    \hline\hline
\hbox{User $s_1$}& \hbox{User $s_2$}&\hbox{User $s_3$}& \hbox{User $s_4$} & \hbox{Band Availability}\\[5pt]\hline
    $\overline{P}_{{\rm out},1{s_1}}\!=\!0.6$ &$\overline{P}_{{\rm out},1{s_2}}\!=\!0.7$& $\overline{P}_{{\rm out},1{s_3}}\!=\!0.6$ &$\overline{P}_{{\rm out},1{s_4}}\!=\!0.7$&$\pi_1\!=\!1-\frac{\lambda_{p_1}}{\overline{P}_{{\rm out},p_1}}\!=\!0.45$ \\
      $\overline{P}_{{\rm out},2{s_1}}\!=\!0.8$ &$\overline{P}_{{\rm out},2{s_2}}\!=\!0.6$ &$\overline{P}_{{\rm out},2{s_3}}\!=\!0.8$ &$\overline{P}_{{\rm out},2{s_4}}\!=\!0.5$ &$\pi_2\!=\!1-\frac{\lambda_{p_2}}{\overline{P}_{{\rm out},p_2}}\!=\!0.2$\\
     $ \overline{P}_{{\rm out},3{s_1}}\!=\!0.7$ &$\overline{P}_{{\rm out},3{s_2}}\!=\!0.8 $ &  $ \overline{P}_{{\rm out},3{s_3}}\!=\!0.7$ &$\overline{P}_{{\rm out},3{s_4}}\!=\!0.6 $ &$\pi_3\!=\!1-\frac{\lambda_{p_3}}{\overline{P}_{{\rm out},p_3}}\!=\!0.6$\\
     $\overline{P}_{{\rm out},4{s_1}}\!=\!0.85 $ &$ \overline{P}_{{\rm out},4{s_2}}\!=\!0.9 $& $\overline{P}_{{\rm out},4{s_3}}\!=\!0.5 $ &$ \overline{P}_{{\rm out},4{s_4}}\!=\!0.95 $&$\pi_4\!=\!1-\frac{\lambda_{p_4}}{\overline{P}_{{\rm out},p_4}}\!=\!0.4$ \\
     $\overline{P}_{{\rm out},5{s_1}}\!=\!0.9 $ &$ \overline{P}_{{\rm out},5{s_2}}\!=\!0.95 $& $\overline{P}_{{\rm out},5{s_3}}\!=\!0.95 $ &$ \overline{P}_{{\rm out},5{s_4}}\!=\!0.95 $&$\pi_5\!=\!1-\frac{\lambda_{p_5}}{\overline{P}_{{\rm out},p_5}}\!=\!0.6$  \\[5pt]\hline
\end{tabular}
\end{center}
\end{table*}

\begin{figure}
	\centering
		\includegraphics[width=1\columnwidth]{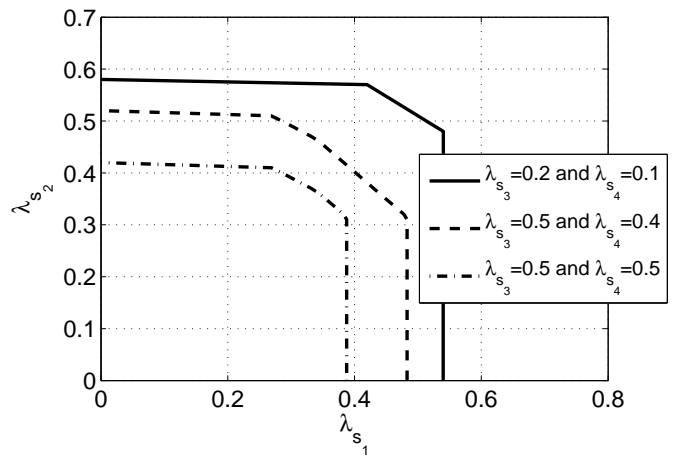}
	\caption{Stability region of system $\mathcal{S}$. The parameters used to generate the figure are: $M_s=4$ and $M_p=5$ and Table \ref{table2}.}
	\label{fig:subfig2}
\end{figure}
\begin{figure}
	\centering
		\includegraphics[width=1\columnwidth]{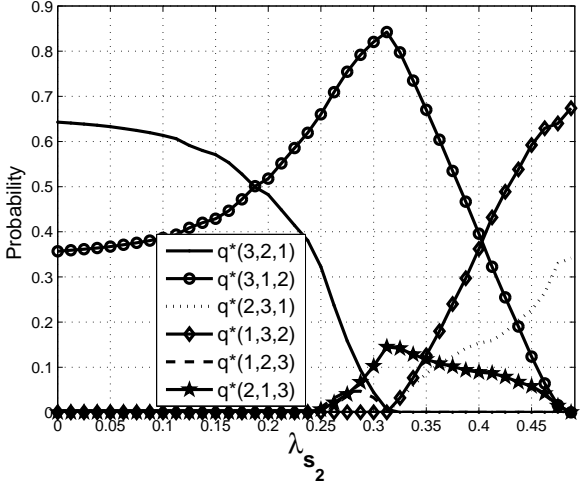}
	\caption{The optimal SUs' allocation probabilities for system $\mathcal{S}$ in case of $M_s\!=\!M_p\!=\!3$. The parameters used to generate the figure are $\lambda_{s_3}\!=\!\lambda_{s_4}\!=\!0.35$ packets per time slot and the first three rows and the columns of users $s_1$, $s_2$ and $s_3$ in Table \ref{table2}.}
	\label{fig:subfig3}
\end{figure}


 Future research for system $\mathcal{S}$ can be directed at one of the following points. 1) Considering systems with multiple assignment within one slot. More specifically, the assignment of users happens multiple time per slot to satisfy all users. The knowledge of the transmit CSI can enhance the system performance and allow bands exchange among users; 2) allowing priority among SUs such that multiple users can be assigned to the same band with different priority in band accessing. The priority of transmission can be established by making the lower priority user sense the higher priority user activity for certain time duration within the slot; or 3) another possible extension is to study the impact of sensing errors on the system's performance. For system $\hat{\mathcal{S}}$, the extension can be directed in terms of 1) adding multipacket reception capabilities to the receiving nodes; or 2) allowing band selection at different time instants per slot followed by sensing duration to avoid perturbing the current transmission~\cite{tvtjournal}.

\section*{Appendix A}
          In this Appendix, we provide the solution of optimization problem (\ref{dfgfggg2}). The first derivative of the objective function of (\ref{dfgfggg2}) with respect to $\Gamma_{22}$ for a fixed $\Gamma_{21}$ is given by

                \begin{equation}
\begin{split}
\frac{\partial}{\partial \Gamma_{22}}\frac{\Gamma_{22} K_1-K_2}{D+C\Gamma_{22}}=  \frac{K_2 C+D K_1}{( C\Gamma_{22}+D)^2}
    \end{split}
    \end{equation}
where
            \begin{equation}
\begin{split}
&C K_2+ DK_1\!=\!(1-\Gamma_{21})^2 \mu_{22} \mu_{11}- \Gamma_{21}^2  \mu_{12} \mu_{21}
    \end{split}
    \end{equation}
    Based on the first derivative, $\lambda_{s_2}-\!\Gamma_{21}  \mu_{12}$, and the value of $C$, the optimal solution of $\Gamma_{22}$, for a fixed $\Gamma_{21}$, is obtained as follows:
    \begin{itemize}
    \item If the derivative is positive, i.e., $C K_2\!+\!D K_1\!>\!0$, the maximum of the objective function is attained when $\Gamma_{22}$ is adjusted to its highest feasible value. Using the constraints, the highest feasible value of $\Gamma_{22}$, which represents the optimal solution of the optimization problem, is obtained as follows:
        \begin{itemize}
        \item If $C>0$ and $\frac{\lambda_{s_2}-\Gamma_{21} \mu_{12} }{C}\le1$, the optimal $\Gamma_{22}$ is $\Gamma^*_{22}=1$.
        \item If $C>0$ and $\frac{\lambda_{s_2}-\Gamma_{21} \mu_{12} }{C}>1$, the problem is infeasible.
        \item If $C<0$, $\lambda_{s_2}-\Gamma_{21}  \mu_{12} < 0$, and $\frac{\lambda_{s_2}-\Gamma_{21} \mu_{12}}{C}>0$, the optimal $\Gamma_{22}$ is $\Gamma^*_{22}=\min\{\frac{\lambda_{s_2}-\Gamma_{21} \mu_{12}}{C},1\}$.
        \item If $C<0$ and $\lambda_{s_2}-\Gamma_{21}  \mu_{12} > 0$, the problem is infeasible.
\end{itemize}
 \item If the derivative is negative, i.e., $C K_2\!+\!D K_1\!<\!0$, the maximum of the objective function is attained when $\Gamma_{22}$ is set to its lowest feasible value. Using the constraints, the lowest feasible value of $\Gamma_{22}$, which represents the optimal solution of the optimization problem, is obtained as follows:
        \begin{itemize}
        \item If $C>0$ and $\frac{\lambda_{s_2}-\Gamma_{21} \mu_{12} }{C}\le1$, the optimal $\Gamma_{22}$ is $\Gamma^*_{22}=\max\{\frac{\lambda_{s_2}-\Gamma_{21} \mu_{12} }{C},0\}$.
                \item If $C>0$ and $\frac{\lambda_{s_2}-\Gamma_{21} \mu_{12} }{C}>1$, the problem is infeasible.
        \item If $C<0$ and $\lambda_{s_2}-\Gamma_{21}  \mu_{12} < 0$, the optimal $\Gamma_{22}$ is $\Gamma^*_{22}=0$.
        \item If $C<0$ and $\lambda_{s_2}-\Gamma_{21}  \mu_{12} > 0$, the problem is infeasible.
\end{itemize}
\end{itemize}
\section*{Appendix B}
In this Appendix, we prove the advantage of system $\mathcal{S}$ over system $\hat{\mathcal{S}}$.
\begin{proof}
We investigate the system with $M_p \geq M_s$ first. Assume the same pattern of queue occupancy in both systems. A packet departs the queue of user $s_k$ if user $s_k$ selects band $B_j$ and all nonempty queue users do not select band $B_j$, band $B_j$ is available, and the channel between user $s_k$ and its destination is not in outage. The mean service rate of user $s_k$ with a nonempty queue is
\begin{equation}
\begin{split}
\mu^{\left(\hat{\mathcal{S}}\right)}_{s_k}&\!=\!  \sum_{j=1}^{M_p} \mu_{jk}\ \Gamma_{{{j}k}} \prod_{\substack{{v\in \mathcal{N}}\\{v\ne k}}}(1-\Gamma_{{j}v})
\end{split}
\label{hatSe}
\end{equation}
\noindent where $\mathcal{N}$ is the set of SUs with nonempty queues. Note that we use the superscript $\hat{\mathcal{S}}$ to make it clear that expression (\ref{hatSe}) is for system $\hat{\mathcal{S}}$.
Using (\ref{omeg_for}) for the service rate of user $s_k$ under system $\mathcal{S}$, and subtracting (\ref{hatSe}) from (\ref{omeg_for}), we get
\begin{equation}
\begin{split}
\mu^{\left(\mathcal{S}\right)}_{s_k}\!-\!\mu^{\left(\hat{\mathcal{S}}\right)}_{s_k}&\!=\! \sum_{j=1}^{M_p}  \omega_{j k} \ \mu_{jk} \!-\!\sum_{j=1}^{M_p} \mu_{jk} \ \Gamma_{{jk}} \prod_{\substack{{v \in \mathcal{N}}\\{v\ne k}}}(1\!-\!\Gamma_{jv})\\ & =\! \sum_{j=1}^{M_p}\mu_{jk} \bigg(\omega_{j k} - \Gamma_{{{j}k}} \prod_{\substack{{v \in \mathcal{N}}\\{v\ne k}}}(1-\Gamma_{{j}v})\bigg)
\end{split}
\end{equation}
Note that $\sum_{j=1}^{M_p}\Gamma_{{{j}k}} \prod_{\substack{{v \in \mathcal{N}}\\{v\ne k}}}(1-\Gamma_{jv})$ represents the probability of one user being assigned a certain band with all other users with nonempty queues being assigned to another band. This configuration is a subset of all possible users' assignments which additionally include a situation with two or more users with nonempty queues assigned to a band and the rest of users assigned to another band. This means that the sum given by $\sum_{j=1}^{M_p}\Gamma_{{{j}k}} \prod_{\substack{{v \in \mathcal{N}}\\{v\ne k}}}(1-\Gamma_{{j}v})$ is less than or equal to $1$. Since $\sum_{j=1}^{M_p}\omega_{j k}\!=\!1$, we can always find $\omega_{j k} \ge \Gamma_{{{j}k}} \prod_{\substack{{v \in \mathcal{N}}\\{v\ne k}}}(1-\Gamma_{{j}v})$.

For completeness, it should be shown that if $\omega_{j k} = \Gamma_{{{j}k}} \prod_{\substack{{v \in \mathcal{N}}\\{v\ne k}}}(1-\Gamma_{{j}v})$, the result satisfies the constraint that $\sum_{k\in \mathcal{N}}\omega_{jk}\le1$. That is, $\sum_{k\in \mathcal{N}}\omega_{jk}=\sum_{k\in \mathcal{N}}\Gamma_{{{j}k}} \prod_{\substack{{v \in \mathcal{N}}\\{v\ne k}}}(1-\Gamma_{{j}v})\le 1$. The probability of an SU individually assigned to band $j$ is $\sum_{k\in \mathcal{N}}\Gamma_{{jk}} \prod_{\substack{{v \in \mathcal{N}}\\{v\ne k}}}(1-\Gamma_{jv})\le 1$. Hence, $\sum_{k\in \mathcal{N}}\omega_{j k}$, when $\omega_{j k} = \Gamma_{{{j}k}} \prod_{\substack{{v \in \mathcal{N}}\\{v\ne k}}}(1-\Gamma_{{j}v})$, is less than or equal to unity. This completes the first part of the proof.

 Now, if $M_p\le M_s$, this case can be seen as a system with $M_p\!=\!M_s$ with $M_s- M_p$ zero-bandwidth bands. Thus, we can infer that $\mathcal{R}(\mathcal{S})$ contains $\mathcal{R}(\hat{\mathcal{S}})$ in all cases. This completes the proof.
\end{proof}
\balance
\bibliographystyle{IEEEtran}
\bibliography{IEEEabrv,bandsbib}

\end{document}